\definecolor{mycolor}{rgb}{0, 0, 0}
\newcommand{\etal}{{et~al.}}
\newcommand{\conv}[1]{\mathrm{CH}(#1)}
\title{Improved Bounds for Covering Paths and Trees in the Plane}
\author{Ahmad Biniaz\thanks{School of Computer Science, University of Windsor, \texttt{abiniaz@uwindsor.ca}. Research supported by NSERC.}}
\date{}
\newtheorem{lemma}{Lemma}
\newtheorem{corollary}{Corollary}
\newtheorem{theorem}{Theorem}
\newtheorem*{problem*}{Problem}
\newtheorem*{claim*}{Claim}
\newtheorem*{invariant*}{Invariant}
\begin{document}
	\maketitle
	\begin{abstract}
A {\em covering path} for a planar point set is a path drawn in the plane with straight-line edges such that every point lies at a vertex or on an edge of the path. A {\em covering tree} is defined analogously. Let $\pi(n)$  be the minimum number such that every set of $n$ points in the plane can be covered by a noncrossing path with at most $\pi(n)$ edges. Let $\tau(n)$ be the analogous number for noncrossing covering trees. Dumitrescu, Gerbner, Keszegh, and
T{\'{o}}th (Discrete \& Computational Geometry, 2014) established the following inequalities:

\[\frac{5n}{9} - O(1)  <\pi(n)  < \left(1-\frac{1}{601080391}\right)n, \text{ ~~~and ~~~}\frac{9n}{17} - O(1)<\tau(n)\leqslant \left\lfloor\frac{5n}{6}\right\rfloor.\] 
We report the following improved upper bounds:
 \[\pi(n)\leqslant \left(1-\frac{1}{22}\right)n, \text{ ~~~and ~~~} \tau(n)\leqslant \left\lceil\frac{4n}{5}\right\rceil. \]

In the same context we study rainbow polygons. For a set of colored points in the plane, a {\em perfect rainbow polygon} is a simple polygon
that contains exactly one point of each color in its interior or on its boundary.
Let $\rho(k)$ be the minimum number such that every $k$-colored~point set in the plane admits a perfect rainbow polygon of size $\rho(k)$. Flores{-}Pe{\~{n}}aloza, Kano, Mart{\'{\i}}nez{-}Sandoval, Orden, Tejel, T{\'{o}}th, Urrutia, and
Vogtenhuber (Discrete Mathematics, 2021) proved that $20k/19 - O(1)  <\rho(k)  < 10k/7 + O(1).$
We report the improved upper bound $\rho(k)< 7k/5 + O(1)$. 

To obtain the improved bounds we present simple $O(n\log n)$-time algorithms that achieve paths, trees, and polygons with our desired number of edges. 

\end{abstract}

\section{Introduction}
\begin{wrapfigure}{r}{0.18\textwidth}
	\begin{center}
		\vspace{-20pt}
		\includegraphics[width=.15\textwidth]{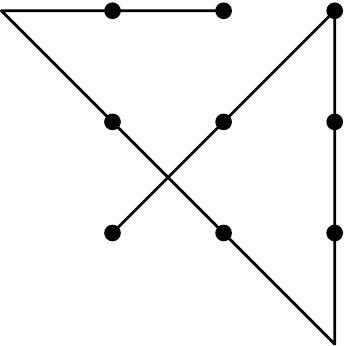}
	\end{center}
	\vspace{-20pt}
\end{wrapfigure}

Traversing a set of points in the plane by a polygonal path possessing some desired properties has a rich background. For example the famous traveling salesperson path problem asks for a polygonal path with minimum total edge length \cite{Arora1998,Papadimitriou1977}. In recent years there has been an increased interest in paths with properties such as being noncrossing \cite{Aichholzer2010,Cerny2007}, minimizing the longest edge length \cite{Biniaz2022}, maximizing the shortest edge length \cite{Arkin1999}, minimizing the total or the largest turning angle \cite{Aggarwal1999, Fekete1997}, and minimizing the number of turns (which is the same as minimizing the number of edges) \cite{Dumitrescu2014,Stein2001} to name a few.

The main focus of this paper is polygonal paths with a small number of edges. It is related to the classical nine dots puzzle which asks for covering the vertices of a $3\!\times\! 3$ grid by a polygonal path with no more than 4 segments. It appears in Sam Loyd's  Cyclopedia of Puzzles from 1914 \cite{Loyd1914}.

Let $P$ be a set of $n$ points in the plane. A {\em spanning path} for $P$ is a path drawn in the plane with straight-line edges such that every point of $P$ lies at a vertex of the path and every vertex of the path lies at a point of $P$. In other words, it is a Hamiltonian path which has exactly $n-1$ edges. The path in the figure above is not a spanning path because two of its vertices do not lie on given points. A {\em covering path} for $P$ is a path drawn in the plane with straight-line edges such that every point of $P$ lies at a vertex or on an edge of the path. A vertex of a covering path can be any point in the plane (not necessarily in $P$). The path in the figure above is a covering path with $4$ edges. With these definitions, any spanning path is also a covering path, but a covering path may not be a spanning path. A {\em covering tree} for $P$ is defined analogously as a tree drawn in the plane with straight-line edges such that every point of $P$ lies at a vertex or on an edge of the tree. A covering path or a tree is called {\em noncrossing} if its edges do not cross each other. The edges of covering paths and trees are also referred to as {\em links} in the literature \cite{Arkin2003b}. 

Covering paths and trees have received considerable attention in recent years, see e.g. \cite{Arkin2003b,Dumitrescu2014,Keszegh2013}. In particular covering paths with a small number of edges find applications in robotics and heavy machinery for which turning is an expensive operation \cite{Stein2001}. Covering trees with a small number of edges are useful in red-blue separation \cite{Fulek2013} and in constructing rainbow polygons \cite{Flores-Penaloza21}. In 2010 F.~Mori\'{c} \cite{Demaine2011} and later Dumitrescu, Gerbner, Keszegh, and
T{\'{o}}th \cite{Dumitrescu2014} raised many challenging questions about covering paths and trees. Specifically they asked the following two questions which are the main topics of this paper. As noted in \cite{Demaine2011}, analogous questions were asked by E.~Welzl in Gremo’s Workshop on Open Problems 2011.

\begin{enumerate}
	\item What is the minimum number $\pi(n)$ such that every set of $n$ points in the plane can be covered by a noncrossing path with at most $\pi(n)$ edges?
	\item What is the minimum number $\tau(n)$ such that every set of $n$ points in the plane can be covered by a noncrossing tree with at most $\tau(n)$ edges?
\end{enumerate}

For both $\pi(n)$ and $\tau(n)$, a trivial upper bound is $n-1$ (which comes from the existence of a noncrossing spanning path) and a trivial lower bound is $\lceil\frac{n}{2}\rceil$ (because if no three points are collinear then each edge covers at most two points). In 2014,  Dumitrescu \etal~\cite{Dumitrescu2014} established, among other interesting results, the following nontrivial bounds: 
\[\frac{5n}{9} - O(1)  <\pi(n)  < \left(1-\frac{1}{601080391}\right)n, \text{ ~~~and ~~~}\frac{9n}{17} - O(1)<\tau(n)\leqslant \left\lfloor\frac{5n}{6}\right\rfloor.\]

The following is a related question that has recently been raised by Flores{-}Pe{\~{n}}aloza, Kano, Mart{\'{\i}}nez{-}Sandoval, Orden, Tejel, T{\'{o}}th, Urrutia, and
Vogtenhuber \cite{Flores-Penaloza21} in the context of rainbow polygons. For a set of colored points in the plane, a {\em rainbow polygon} is a simple polygon
that contains at most one point of each color in its interior or on its boundary. A rainbow polygon is called  {\em perfect} if it contains exactly one point of each color. The {\em size} of a polygon is the number of its edges (which is the same as the number of its vertices).

\begin{enumerate}
	\item[3.] What is the minimum number $\rho(k)$ (known as {\em the rainbow index}) such that every $k$-colored point set in the plane, with no three collinear points, admits a perfect rainbow polygon of size $\rho(k)$?
\end{enumerate}

Question 3 is related to covering trees in the sense that (as we will see later in Section~\ref{polygon-section}) particular covering trees could lead to better upper bounds for $\rho(k)$. Flores{-}Pe{\~{n}}aloza \etal~\cite{Flores-Penaloza21} established the following inequalities: $$\frac{20k}{19} - O(1)  <\rho(k)  < \frac{10k}{7} + O(1).$$

The upper bounds on $\pi(n)$, $\tau(n)$, and $\rho(n)$ are universal (i.e., any point set admits these bounds) and they are obtained by algorithms that achieve paths, trees, and polygons of certain size \cite{Dumitrescu2014,Flores-Penaloza21}. The lower bounds, however, are existential (i.e., there exist point sets that achieve these bounds) and they are obtained by the same point set that is exhibited in \cite{Dumitrescu2014}. Perhaps there should be configurations of points that achieve better lower bounds for each specific number. 

\subsection{Our Contributions}

Narrowing the gaps between the lower and upper bounds for $\pi(n)$, $\tau(n)$, and $\rho(n)$ are open problems which are explicitly mentioned in \cite{Dumitrescu2014,Flores-Penaloza21}.
In this paper we report the following improved upper bounds for the three numbers:  
\[\pi(n)\leqslant \left(1-\frac{1}{22}\right)n, \text{ ~~~~} \tau(n)\leqslant \left\lceil\frac{4n}{5}\right\rceil, \text{ ~~~and ~~~}\rho(k)< \frac{7k}{5} + O(1). \]
The new bounds for $\pi(n)$ and $\tau(n)$ are the first improvements in 8 years. To obtain these bounds we present algorithms that achieve noncrossing covering paths, noncrossing covering trees, and rainbow polygons with our desired number of edges. The algorithms are simple and run in $O(n\log n)$ time where $n$ is the number of input points. 
The running time is optimal for paths because computing a noncrossing covering path has an $\Omega(n \log n)$ lower bound \cite{Dumitrescu2014}.  A noncrossing covering tree, however, can be computed in $O(n)$ time by taking a spanning star. We extend our path algorithm and achieve an upper bound of $(1-\frac{1}{22})n+2$ for noncrossing {\em covering cycles}. This is a natural variant of the traveling salesperson tour problem with the objective of minimizing the number of links, which is NP-hard \cite{Arkin2003b}.

Our algorithms share some similarities with previous algorithms in the sense that both are iterative and use the standard plane sweep technique which scans the points from left to right. However, to achieve the new bounds we employ new geometric insights and make use of convex layers and the Erd\H{o}s-Szekeres theorem \cite{Erdos1935}.

Regardless of algorithmic implications, our results are important because they provide new information on universal numbers $\pi(n)$, $\tau(n)$, and $\rho(n)$ similar to the crossing numbers \cite{Aichholzer2020,Czabarka2008,Harary1963}, the size of crossing families (pairwise crossing edges) \cite{Pach2019}, the Steiner ratio \cite{Arora1998,Ivanov2012}, and other numbers and constants studied in discrete geometry (such as \cite{Biniaz2022,Chan2004,Fekete2000}).


\vspace{8pt}
{\color{mycolor}\noindent{\em Remark.} Collinear points are beneficial for covering paths and trees as they usually lead to paths and trees with fewer edges. To avoid the interruption of our arguments we first describe our algorithms for point sets with no three collinear points. In the end we show how to handle collinearities.}

\subsection{Related Problems and Results}
If we drop the noncrossing property, Dumitrescu \etal~\cite{Dumitrescu2014} showed that every set of $n$ points in the plane admits a (possibly self-crossing) covering path with $n/2 + O(n/\log n)$ edges. 
Covering paths have also been studied from the optimization point of view. The problem of computing a covering path with minimum number of edges for a set of points in the plane (also known as the {\em minimum-link} covering path problem and the {\em minimum-bend} covering path problem) is shown to be NP-hard by Arkin \etal~\cite{Arkin2003b}.  Stein and Wagner \cite{Stein2001} presented an $O(\log z)$-approximation algorithm where $z$ is the
maximum number of collinear points.

Keszegh \cite{Keszegh2013} determined exact values of $\pi(n)$ and $\tau(n)$ for vertices of square grids. The axis-aligned version of covering paths is also well-studied and various lower bounds, upper bounds, and approximation algorithms are presented to minimize the number of edges of such paths; see e.g.  \cite{Bereg2009,Collins2004,Jiang2015}. Covering trees are studied also in the context of separating red and blue points in the plane \cite{Fulek2013}.
The problem of covering points in the plane with minimum number of lines is another related problem which is also well-studied, see e.g. \cite{Chen2020,Grantson2006,Langerman2005}.

For problems and results related to rainbow polygons we refer the reader to the paper of Flores{-}Pe{\~{n}}aloza \etal~\cite{Flores-Penaloza21}. In particular, they determine the exact rainbow indices for small values of $k$ by showing that $\rho(k)=k$ for $k\in\{3,4,5,6\}$ and $\rho(7)=8$.

\subsection{Preliminaries}
For two points $p$ and $q$ in the plane we denote by $\ell(p,q)$ the line through $p$ and $q$, and by $pq$ the line segment with endpoints $p$ and $q$.
For two paths $\delta_1$ and $\delta_2$, where $\delta_1$ ends at the same vertex at which $\delta_2$ starts, we denote their concatenation by $\delta_1 \oplus \delta_2$. 

A point set $P$ is said to be in {\em general position}
if no three points of $P$ are collinear. We denote the convex hull of $P$ by $\conv{P}$. A set $K$ of $k$ points in the plane in convex position, with no two points on a vertical line, is
a $k$-{\em cap} (resp. a $k$-{\em cup}) if all points of
$K$ lie on or above (resp. below) the line through the leftmost and rightmost
points of $K$.
A classical result of Erd\H{o}s and Szekeres \cite{Erdos1935} implies that every set of at least ${{2k-4}\choose{k-2}} + 1$ points in the plane in general position, with no two points on a vertical line, contains
a $k$-cap or a $k$-cup. This bound is tight in the sense that there are point sets of size
${{2k-4}\choose{k-2}}$ that do not contain any $k$-cap or $k$-cup \cite{Erdos1935}.


\section{Noncrossing Covering Paths}
\label{path-section}

In this section we prove that $\pi(n)\leqslant (1-1/22)n$.
We start by the following folklore result on the existence of noncrossing polygonal paths among points in the plane; see e.g. \cite{Dumitrescu2014,Fulek2013}.

\begin{lemma}
	\label{convex-lemma}
	Let $P$ be a set of 
	points in the plane in the interior of a convex region C, and let $p$ and $q$ be two points on the boundary of C. Then $P\cup\{p,q\}$ admits a noncrossing spanning path with $|P|+1$ edges such that
	its endpoints are $p$ and $q$, and its relative interior lies in the interior of $C$. 
\end{lemma}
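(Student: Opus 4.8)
The plan is to proceed by induction on $|P|$, building the path by recursively splitting the point set with a line through one of the boundary points $p$ or $q$. First I would handle the base case: if $|P|=0$ then the segment $pq$ itself is a noncrossing spanning path with one edge whose relative interior lies in the interior of $C$ (since $p,q$ are on the boundary of the convex region $C$, the open segment between them lies in the interior). If $|P|=1$, say $P=\{r\}$, then the two-edge path $p\oplus r\oplus q$ works, and its relative interior lies in the interior of $C$ because $r$ is in the interior and the open segments $pr$ and $rq$ stay in the interior of the convex set $C$.

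For the inductive step with $|P|=m\geqslant 1$, the key idea is to pick a point $r\in P$ that is \emph{extreme} in some direction relative to $p$ — concretely, sort the points of $P$ by angle as seen from $p$, and let $r$ be the first one encountered when rotating the ray from $p$ toward $q$. Draw the line $\ell(p,r)$; because $r$ is angularly extreme, all remaining points of $P\setminus\{r\}$ lie strictly on one side of $\ell(p,r)$. This line, together with the boundary of $C$, cuts $C$ into two convex pieces: one small sliver containing no points of $P$ besides $r$ on its boundary edge, and one larger convex region $C'$ whose boundary contains $r$ and $q$ and whose interior contains all of $P\setminus\{r\}$. I would then apply the induction hypothesis to $P\setminus\{r\}$ inside $C'$ with boundary points $r$ and $q$, obtaining a noncrossing spanning path $\delta'$ on $(P\setminus\{r\})\cup\{r,q\}$ with $(m-1)+1=m$ edges, endpoints $r$ and $q$, relative interior in the interior of $C'$. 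Prepending the edge $pr$ yields the path $(p\oplus r)\oplus\delta'$ on $P\cup\{p,q\}$ with $m+1$ edges. Noncrossing holds because $\delta'$ lives (except at its endpoint $r$) strictly inside $C'$, which lies entirely on one side of $\ell(p,r)$, while the edge $pr$ lies on $\ell(p,r)$; and the whole path's relative interior is interior to $C$ since the interior of $C'$ and the open segment $pr$ are both contained in the interior of $C$.

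The main obstacle I anticipate is making the geometric splitting step fully rigorous — in particular, arguing that the line $\ell(p,r)$ really does separate $r$ off from the rest of $P$ while leaving a convex subregion $C'$ that still has $q$ on its boundary and still contains $P\setminus\{r\}$ in its interior. One has to be slightly careful about the case where $p$, $r$, $q$ are positioned so that the "sliver" on the wrong side is empty or degenerate, and about the case $m=1$ (which the base case already absorbs). If the paper assumes general position there is no issue with ties in the angular order; in the general case one breaks ties by distance from $p$, taking the nearest among angularly-tied points, which still leaves all other points weakly on one side — but since the lemma as stated only needs a noncrossing path (edges may pass through points), even collinear coincidences cause no trouble. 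A cleaner alternative that avoids the angular sorting is to induct by peeling the convex hull: if some point of $P$ lies on $\conv{P\cup\{p,q\}}$'s hull adjacent to $p$, route to it first; this is essentially the same argument organized around hull edges rather than angles, and I would present whichever is shorter.
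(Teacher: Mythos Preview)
The paper does not prove this lemma; it is stated as a folklore result with citations to \cite{Dumitrescu2014,Fulek2013}. Your inductive argument via an angular sweep from $p$ (equivalently, convex-hull peeling) is the standard proof and is correct. One minor quibble: in the base case $|P|=0$ you assert that the open segment $pq$ lies in the interior of $C$, but this can fail if $p$ and $q$ happen to lie on a common flat piece of $\partial C$; the lemma as used in the paper tacitly excludes that degeneracy, so nothing downstream is affected. Otherwise the splitting step is sound: choosing $r$ angularly extreme from $p$ (rotating from a supporting direction toward $q$) places all of $P\setminus\{r\}$ together with $q$ strictly on one side of $\ell(p,r)$, so $C'=C\cap\{\text{that halfplane}\}$ is convex, has $r$ and $q$ on its boundary, and contains $P\setminus\{r\}$ in its interior---exactly what the induction needs. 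The noncrossing and interior conditions follow since $\mathrm{int}(C')\subseteq\mathrm{int}(C)$ is disjoint from the segment $pr\subseteq\ell(p,r)$ except at $r$.
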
 

In fact the spanning path that is obtained by Lemma~\ref{convex-lemma} is a noncrossing covering path for $P\cup \{p,q\}$ and it lies in the convex hull of $P\cup \{p,q\}$. The following lemma shows that any set of $23$ points can be
covered by a noncrossing path with $21$ edges.

\begin{lemma}
	\label{27-lemma}
	{\color{mycolor}Let $P$ be a set of at least $23$ points in the plane such that no two points have the same $x$-coordinate. Let $H$ be the vertical strip  bounded by the vertical lines through the leftmost and rightmost points of $P$. Then there exists a noncrossing covering path for $P$ with $|P|-2$ edges that is contained in $H$ and its endpoints are the leftmost and rightmost points of $P$.}
\end{lemma}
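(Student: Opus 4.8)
The goal is to cover $|P| \geq 23$ points with a noncrossing path of only $|P|-2$ edges, spanning from the leftmost point $p_\ell$ to the rightmost point $p_r$ inside the vertical strip $H$. A generic spanning path has $|P|-1$ edges, so we need to ``save'' one edge; equivalently, we need one edge of the path to pass through (at least) two points of $P$ simultaneously — i.e., we need to exploit three collinear points, or more precisely an edge that covers two points as interior/endpoint while en route. Since $P$ is in general position we cannot hope for three collinear input points, so instead the plan is to find a subset of points that lie on a common line in the \emph{combinatorial} sense used by the Erd\H{o}s--Szekeres machinery: find a large cap or cup and route a single straight segment along its supporting line, using that segment as one edge of the path that ``covers'' the extreme points of the cap/cup for free.

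Here is the outline I would follow. First, apply the Erd\H{o}s--Szekeres theorem from the Preliminaries: since $|P| \geq 23 \geq \binom{2k-4}{k-2}+1$ for $k=5$ (as $\binom{6}{3}=20$), $P$ contains a $5$-cap or a $5$-cup $K=\{q_1,\dots,q_5\}$ ordered by $x$-coordinate. Assume without loss of generality it is a cup (the cap case is symmetric by reflection). The segment $q_1 q_5$ has the three middle points $q_2,q_3,q_4$ strictly below it, and — crucially — nothing of $K$ above it. Next, I would split $P$ into three regions determined by the vertical lines through $q_1$ and $q_5$ and by the segment $q_1q_5$: the ``left part'' $P_L$ of points with $x$-coordinate at most that of $q_1$, the ``right part'' $P_R$ with $x$-coordinate at least that of $q_5$, and the ``middle part'' which is itself split by the chord $q_1q_5$ into an upper pocket $M_U$ (points strictly above the segment, none of which are in $K$) and a lower pocket $M_L$ (points below the segment, containing $q_2,q_3,q_4$). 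Then I construct the covering path piecewise: use Lemma \ref{convex-lemma} to build a noncrossing spanning path of $P_L \cup \{p_\ell\}$-type pieces reaching $q_1$; then take the \textbf{single straight edge} $q_1 q_5$, which by itself covers the two ``saved'' points among $q_2,\dots,q_4$ only if they lie \emph{on} it — so instead the real trick must be that the cup's endpoints $q_1,q_5$ together with the chord let us detour the path for $M_L$ (including $q_2,q_3,q_4$) through the lower pocket and back, while a disjoint sub-path handles $M_U$, and the edge count works out because one of the pocket traversals reuses a point. More carefully: Lemma \ref{convex-lemma} applied to $M_U$ with the two boundary points $q_1,q_5$ gives a path from $q_1$ to $q_5$ with $|M_U|+1$ edges lying above the chord; Lemma \ref{convex-lemma} applied to $M_L$ with boundary points $q_1,q_5$ gives a path with $|M_L|+1$ edges below the chord. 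Concatenating these at $q_5$ (going up then coming back down, or vice versa) yields a traversal of the whole middle strip from $q_1$ back to $q_1$ — a closed walk — which is not what we want; so instead we route $P_L \to q_1 \to (\text{upper pocket}) \to q_5 \to (\text{right part } P_R)$ and separately we must insert the lower pocket. The clean way: treat $q_1 q_5$ as a ``free'' chord. Route the main path as $p_\ell \rightsquigarrow q_1 \xrightarrow{\text{through } M_L} q_5 \rightsquigarrow p_r$, where the middle segment is the Lemma \ref{convex-lemma} path for $M_L \cup\{q_1,q_5\}$ with $|M_L|+1$ edges; this covers $q_2,q_3,q_4$ as interior vertices, which is no savings yet. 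The savings comes from $M_U$: since $M_U$ lies above the chord and the main path stays on/below the chord, we can cover $M_U$ by hanging it off $q_1$ — but a path, not a tree, cannot hang things off.

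Given this tension, the main obstacle — and the place where the paper's actual argument must do something cleverer than my sketch — is \textbf{converting the geometric ``one free line'' into exactly one saved edge of a single noncrossing path without creating a branch or a crossing}. I expect the resolution is: among $q_1,\dots,q_5$, pick the chord $q_i q_{i+2}$ (two apart) so that the single middle point $q_{i+1}$ lies in the pocket, and then Lemma \ref{convex-lemma} applied to (pocket points) $\cup\{q_i,q_{i+1},q_{i+2}\}$ — but with $q_{i+1}$ forced to be an \emph{interior} vertex whose two incident edges are collinear — actually lets the algorithm treat $q_i,q_{i+1},q_{i+2}$ as contributing only to a \emph{single} edge $q_iq_{i+2}$ when $q_{i+1}$ happens to sit on it; since it does not sit on it exactly, the correct statement is that the path $q_i \to q_{i+1} \to q_{i+2}$ uses $2$ edges to absorb $q_{i+1}$, and the saving of one edge over the naive $|P|-1$ bound comes from a global count: the concatenation of Lemma \ref{convex-lemma} pieces over $k$ pockets glued at the $k{-}1$ cap/cup points gives $\sum(|\text{pocket}_j|+1) = (|P| - (k-1)) + k = |P|+1$ edges for a path visiting everything, but because the two extreme pockets share their outer endpoints with $p_\ell, p_r$ we recover $2$ edges, landing at $|P|-1$; to gain the extra edge we additionally use that the convex hull chain of the cap/cup itself can be traversed by $k-1$ edges that the Lemma-\ref{convex-lemma} pieces would otherwise double-count. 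I would therefore carry out the plan as: (1) invoke Erd\H{o}s--Szekeres to get a $5$-cup/cap; (2) partition $P$ by the two extreme vertical lines and the hull chain of the cup; (3) apply Lemma \ref{convex-lemma} to each resulting convex pocket with the appropriate two hull points as endpoints; (4) concatenate the pieces along the cup's hull chain, observing that each internal hull vertex $q_j$ is visited once and its two incident ``piece'' edges can be chosen collinear with the hull chain so that they merge, saving one edge per internal cup vertex and in particular at least one edge overall; (5) check noncrossing-ness from the fact that distinct pockets occupy disjoint regions of $H$ and each Lemma-\ref{convex-lemma} piece stays inside its pocket's convex hull; (6) verify the endpoints are $p_\ell$ and $p_r$ and everything lies in $H$. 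Steps (4) is the crux and is where I would spend the most care, as it requires the pocket sub-paths to approach each shared cup vertex along the hull-chain direction — achievable because Lemma \ref{convex-lemma} gives us freedom in how the path enters and leaves its two designated boundary points, or, failing that, by a small local rerouting near each $q_j$.
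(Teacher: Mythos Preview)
Your proposal has a genuine gap: you never identify the mechanism by which the 5-cup actually saves an edge. You correctly sense that the saving must come from ``one edge covering two points,'' but then talk yourself out of it (``Since $P$ is in general position we cannot hope for three collinear input points'') and end up with step~(4), where you hope the Lemma~\ref{convex-lemma} sub-paths can be made to leave each shared cup vertex $q_j$ collinearly with the hull chain so that two edges merge. Lemma~\ref{convex-lemma} gives no such control over the directions at the endpoints, and in general the pocket paths will not be collinear there; this step does not go through. The actual trick is different and simpler: a covering path may have \emph{Steiner vertices} not in $P$. From a 5-cup $p_1,\dots,p_5$ one extends the line $\ell(p_1,p_2)$ past $p_2$ and the line $\ell(p_3',p_4)$ past $p_3'$ (for a suitably re-chosen $p_3'$) until they meet at a point $c'$ below the cup; the two-edge path $(p_1,c',p_4)$ then covers the four points $p_1,p_2,p_3',p_4$ because $p_2$ lies on the edge $p_1c'$ and $p_3'$ lies on the edge $c'p_4$. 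This is where the entire edge saving comes from; the rest is gluing Lemma~\ref{convex-lemma} pieces in three convex regions cut out by these two lines.

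There are two secondary issues that would also need fixing. First, you apply Erd\H{o}s--Szekeres to all of $P$, so the cup might contain $l$ or $r$; the paper instead removes $l$ and $r$ first and applies the theorem to the remaining $\ge 21$ points, which is exactly why the threshold is $23$ rather than $21$. Second, to guarantee the Steiner point $c'$ lies in $H$ and that the regions to the left of $p_1$ above $\ell(p_1,p_2)$ and to the right of $p_5$ above $\ell(p_4,p_5)$ are empty (needed when $l$ or $r$ lies above those lines), one must choose the cup extremally --- leftmost possible $p_1$, then rightmost possible $p_5$ --- and then do a short case analysis on the positions of $l$ and $r$ relative to $\ell(p_1,p_2)$ and $\ell(p_4,p_5)$. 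Your outline has none of this structure.
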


\begin{proof}
Our proof is constructive. 	  Let $l$ and $r$ be the leftmost and rightmost points of $P$, respectively. 		
	Let $P'=P\setminus\{l,r\}$, and notice that $|P'|\geqslant 21$. 
	{\color{mycolor} We assume that $P'$ is in general position. In the end of the proof we briefly describe how to handle collinearities.}
	By the result of \cite{Erdos1935} the set $P'$ has a $5$-cap or a $5$-cup. After a suitable reflection we may assume that it has a $5$-cup $K$ with points $p_1,p_2,p_3,p_4,p_5$ from left to right, as in Figure~\ref{path-fig1}(a). 
	Among all $5$-cups in $P'$ we may assume that $K$ is one for which $p_1$ is the leftmost possible point. Also among all such $5$-cups (with leftmost point $p_1$) we may assume that $K$ is the one for which $p_5$ is the rightmost possible point. This choice of $K$ implies that the region that is the intersection of $H$ with the halfplane above $\ell(p_1,p_2)$ and the halfplane to the left of the vertical line through $p_1$ is empty of points of $P'$; this region is denoted by $E_1$ in Figure~\ref{path-fig1}(a). Similarly the region that is the intersection of $H$ with the halfplane above $\ell(p_4,p_5)$ and the halfplane to the right of the vertical line through $p_5$ is empty of points of $P'$; this region is denoted by $E_2$ in Figure~\ref{path-fig1}(a).

	For brevity let $\ell_{12}=\ell(p_1,p_2)$ and $\ell_{45}=\ell(p_4,p_5)$. 
	We distinguish two cases: (i) $l$ lies below $\ell_{12}$ or $r$ lies below $\ell_{45}$, and (ii) $l$ lies above $\ell_{12}$ and $r$ lies above $\ell_{45}$.

\vspace{8pt}
\noindent {\bf (i)}	In this case we may assume, up to symmetry, that $l$ lies below $\ell_{12}$ as in Figure~\ref{path-fig1}.
Let $c$ be the intersection point of $\ell_{12}$ with $\ell_{45}$, and $d$ be the intersection point of $\ell_{45}$ with the right boundary of $H$. Since $K$ is a cup, $c$ lies below $K$ and hence in $H$. Consider the ray emanating from $p_4$ and passing through $c$. Rotate this ray clockwise around $p_4$ and stop as soon as hitting a point in the triangle $\bigtriangleup p_2cp_4$; see Figure~\ref{path-fig1}(a). Notice that such a point exists because $p_3$ is in $\bigtriangleup p_2cp_4$. Denote this first hit by $p'_3$ (it might be the case that $p'_3=p_3$). Then $p_1,p_2,p'_3,p_4,p_5$ is a $5$-cup which we denote by $K'$ (again, it might be the case that $K'=K$). Let $c'$ be the intersection point of the rotated ray with $\ell_{12}$. Our choice of $p'_3$ implies that the triangle $\bigtriangleup cp_4c'$ is empty, i.e. its interior has no points of $P$; this triangle is denoted by $E_3$ in Figure~\ref{path-fig1}(a).

\begin{figure}[htb]
	\centering
	\setlength{\tabcolsep}{0in}
	$\begin{tabular}{cc}
		\multicolumn{1}{m{.5\columnwidth}}{\centering\vspace{0pt}\includegraphics[width=.38\columnwidth]{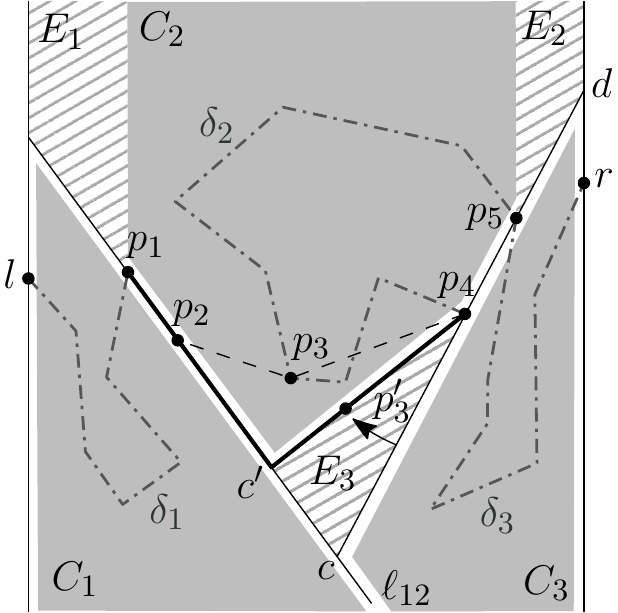}}
		&\multicolumn{1}{m{.5\columnwidth}}{\centering\vspace{0pt}\includegraphics[width=.38\columnwidth]{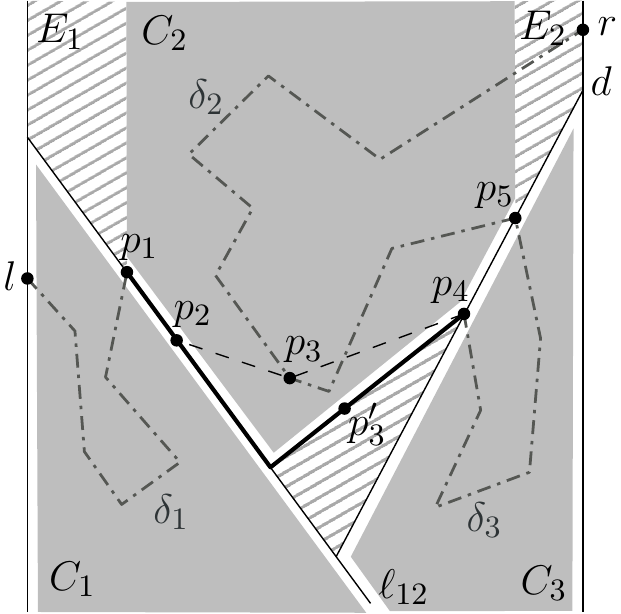}}
		\\
		(a)   &(b) 
	\end{tabular}$
	\caption{Illustration of the proof of Lemma~\ref{27-lemma}. (a) $l$ lies below $\ell_{12}$ and $r$ lies below $\ell_{45}$. (b) $l$ lies below $\ell_{12}$ and $r$ lies above $\ell_{45}$.}
	\label{path-fig1}
\end{figure}

The points of $P'$ lie in the interior or on the boundary of three convex regions $C_1,C_2,C_3$ as depicted in Figures~\ref{path-fig1}(a) and \ref{path-fig1}(b). The region $C_1$ is the intersection of $H$ and the halfplane below $\ell_{12}$.  The region $C_3$ is the intersection of $H$ and the halfplane above $\ell_{12}$ and the halfplane below $\ell_{45}$. The region $C_2$ is the intersection of $H$ and five halfplanes (the halfplanes above the lines $\ell_{12}$, $\ell_{45}$, $\ell(c',p_4)$, the halfplane to the right of the vertical line through $p_1$, and the halfplane to the left of the vertical line through $p_5$). Let $P_i$ be the set of points of $P$ in the interior (but not on the boundary) of each $C_i$.  Then $P_1\cup P_2\cup P_3= P\setminus\{l,p_1,p_2,p'_3,p_4,p_5,r\}$, and thus $|P_1|+|P_2|+|P_3|=|P|-7$.

We construct a covering path for $P$ as follows. The four points $p_1,p_2,p'_3,p_4$ can be covered by the path $(p_1,c',p_4)$ which has two edges $p_1c'$ and $c'p_4$. 
Let $\delta_1$ be the noncrossing path with $|P_1|+1$ edges that is obtained by applying Lemma~\ref{convex-lemma} on $(P_1, C_1, l, p_1)$ where $l$ and $p_1$ play the roles of $p$ and $q$ in the lemma. We now consider two subcases. 
	\begin{itemize}
		\item {\em $r$ lies below $\ell_{45}$.} In this case $r$ is on the boundary of $C_3$, as in Figure \ref{path-fig1}(a). 
		Let $\delta_2$ and $\delta_3$ be the noncrossing paths with $|P_2|+1$ and $|P_3|+1$ edges that are obtained by applying Lemma~\ref{convex-lemma} on $(P_2,C_2,p_4,p_5)$ and $(P_3, C_3,p_5,r)$, respectively. By interconnecting these paths we obtain a noncrossing covering path $\delta_1\oplus (p_1,c',p_4)\oplus \delta_2\oplus\delta_3$ for $P$. This path has $(|P_1|+1)+2+(|P_2|+1)+(|P_3|+1)=|P|-2$ edges, and it lies in $H$.
		
		\item {\em $r$ lies above $\ell_{45}$.} In this case $r$ is on the boundary of the convex region $C_2\cup E_2$, as in Figure \ref{path-fig1}(b). 
		Let $\delta_2$ and $\delta_3$ be the noncrossing paths obtained by applying Lemma~\ref{convex-lemma} on $(P_2,C_2\cup E_2,p_5,r)$ and $(P_3,C_3,p_4,p_5)$, respectively. Then $\delta_1\oplus (p_1,c',p_4)\oplus \delta_3\oplus\delta_2$ is a noncrossing covering path for $P$. This path has $|P|-2$ edges, and it lies in $H$. 
	\end{itemize}	
	
\begin{figure}[htb]
	\centering
	\setlength{\tabcolsep}{0in}
	$\begin{tabular}{cc}
		\multicolumn{1}{m{.6\columnwidth}}{\centering\vspace{0pt}\includegraphics[width=.43\columnwidth]{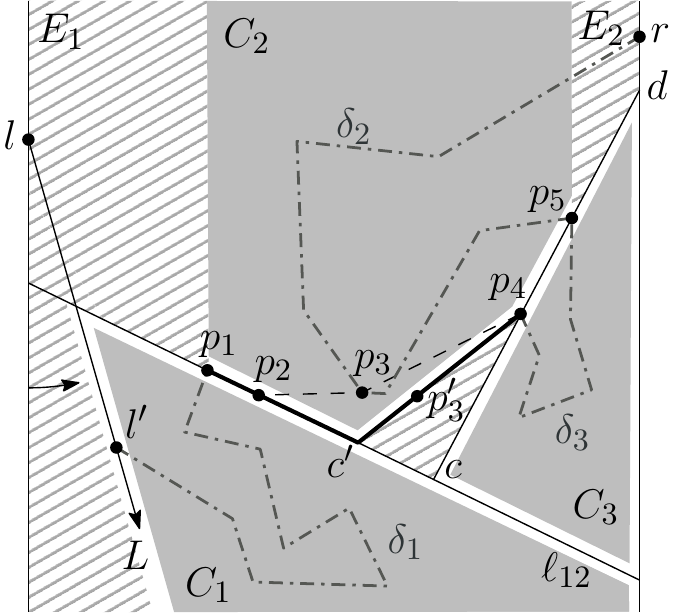}}
		&\multicolumn{1}{m{.4\columnwidth}}{\centering\vspace{0pt}\includegraphics[width=.27\columnwidth]{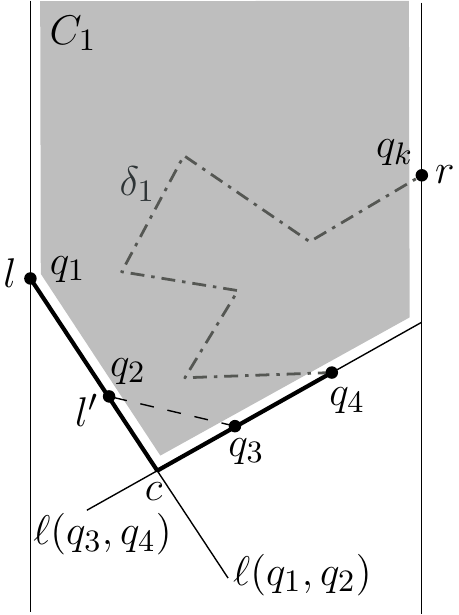}}
		\\
		(a)   &(b) 
	\end{tabular}$		
	\caption{Illustration of the proof of Lemma~\ref{27-lemma} where $l$ lies above $\ell_{12}$ and $r$ lies above $\ell_{45}$. (a) $l'\neq p_1$, and (b) $l'=p_1$ and $r'=p_5$ (here $p_1=q_2$ and $p_5=q_{k-1}$).}
	\label{path-fig2}
\end{figure}	
	
\vspace{8pt}
\noindent {\bf (ii)} In this case $l$ lies above $\ell_{12}$ and $r$ lies above $\ell_{45}$.
Let $L$ and $R$ be the downward rays emanating from $l$ and $r$, respectively. Rotate $L$ counterclockwise around $l$ and stop as soon as hitting a point $l'$ of $P$. Since $E_1$ is empty, $l'$ is either $p_1$ or a point below $\ell_{12}$; see Figure~\ref{path-fig2}(a). Rotate $R$ clockwise around $r$ and stop as soon as hitting a point $r'$ of $P$. Since $E_2$ is empty, $r'$ is either $p_5$ or a point below $\ell_{45}$. We distinguish two subcases.

\begin{itemize}
	\item {\em $l'\neq p_1$ or $r'\neq p_5$.} Up to symmetry we assume that $l'\neq p_1$ as depicted in Figure~\ref{path-fig2}(a).
	Define $c$, $c'$, $d$, $p'_3$ and the $5$-cup $K'$ as in case (i), and recall that the triangle $\bigtriangleup cp_4c'$ is empty.
	The points of $P'$ lie in the interior or on the boundary of three convex regions $C_1,C_2,C_3$ as depicted in Figures~\ref{path-fig2}(a). The region $C_1$ is the intersection of $H$ and the halfplane below $\ell_{12}$ and the halfplane above $\ell(l,l')$.  The regions $C_2$ and $C_3$ are defined as in case (i). Let $P_i$ be the set of points of $P$ in the interior (but not on the boundary) of each $C_i$.  Then $P_1\cup P_2\cup P_3= P\setminus\{l,l',p_1,p_2,p'_3,p_4,p_5,r\}$, and thus $|P_1|+|P_2|+|P_3|=|P|-8$.
	
	We cover $l$ and $l'$ by the edge $(l,l')$ and cover the four points $p_1,p_2,p'_3,p_4$ by the path $(p_1,c',p_4)$ which has two edges.
	Let $\delta_1$, $\delta_2$, and $\delta_3$ be the noncrossing paths with $|P_1|+1$, $|P_2|+1$, and $|P_3|+1$ edges obtained by applying Lemma~\ref{convex-lemma} on $(P_1,C_1,l',p_1)$, $(P_2,C_2\cup E_2,p_5,r)$, and $(P_3,C_3,p_4,p_5)$, respectively; see Figures~\ref{path-fig2}(a). By interconnecting these paths we obtain a noncrossing covering path $(l,l')\oplus\delta_1\oplus (p_1,c',p_4)\oplus \delta_3\oplus\delta_2$ for $P$. This path has $1+(|P_1|+1)+2+(|P_3|+1)+(|P_2|+1)=|P|-2$ edges, and it lies in $H$.
	
	\item {\em $l'= p_1$ and $r' = p_5$.} In this case the lower chain on the boundary of $\conv{P}$ has at least $5$ vertices, including $l$, $l'$, $r$, $r'$, and a point in the triangle formed by $L$, $R$, and $\ell(p_1,p_5)$. Let $k\geqslant 5$ be the number of vertices of this chain. Let $q_1, q_2,\dots, q_k$ denote the vertices of this chain that appear in this order from left to right, as in Figure~\ref{path-fig2}(b). Then $q_1=l$, $q_2=l'=p_1$, $q_k=r$, and $q_{k-1}=r'=p_5$. 
	
	Let $c$ be the intersection point of $\ell(q_1,q_2)$ and $\ell(q_3,q_4)$, which lies in $H$.
Then, the four points $q_1,q_2,q_3,q_4$ can be covered by the path $(q_1,c,q_4)$. 
All points of $P$ lie in the interior or on the boundary of a convex region $C_1$ that is the intersection of $H$ with the halfplanes above $\ell(q_1,q_2)$ and $\ell(q_3,q_4)$; this region is shaded in Figure~\ref{path-fig2}(b). Let $P_1$ be the points of $P$ that lie in the interior (but not on the boundary) of $C_1$. Then $P_1=P\setminus\{q_1,q_2,q_3,q_4,q_k\}$ and $|P_1|=|P|-5$. Let $\delta_1$ be the covering path with $|P_1|+1$ edges that is obtained by applying Lemma~\ref{convex-lemma} on $(P_1, C_1, q_4,q_k)$ where $q_4$ and $q_k$ play the roles of $p$ and $q$ in the lemma. Then $(q_1,c,q_4)\oplus\delta_1$ is a noncrossing covering path for $P$. This path has $2+(|P_1|+1)=|P|-2$ edges, and it lies in $H$.
\end{itemize}
{\color{mycolor}
	This is the end of our proof (for $P'$ being in general position). 

One can simply adjust the above construction to work even if $P'$ is not in general position. For the sake of completeness here we give a brief description of an alternative (and perhaps simpler) construction when $P'$ has three or more collinear points. Let $p_1$, $p_2$, $p_3$ be three collinear points in $P'$ from left to right and let $\ell_{13}$ be the line through these points. We choose $p_1,p_2,p_3$ in such a way that there is no point of $P'$ on $\ell_{13}$ to the left of $p_1$ or to the right of $p_3$. Up to symmetry we have two cases: (i) $l$ lies on or above $\ell_{13}$ and $r$ lies on or below $\ell_{13}$, and (ii) both $l$ and $r$ lie below $\ell_{13}$. 

In case (i) we first obtain a path by applying Lemma~\ref{convex-lemma} on $l$, $p_1$ and all points above $\ell_{13}$. Then we connect $p_1$ and $p_3$ by one edge which also covers $p_2$. Then we obtain another path by applying Lemma~\ref{convex-lemma} on $r$, $p_3$ and all points below $\ell_{13}$. This gives a a covering path with $|P|-2$ edges.

In case (ii) we start from $l$ and walk on the vertices of $\conv{P}$ in clockwise direction (and at the same time cover the visited vertices) and stop at the first vertex, say $p_0$, for which the next vertex, say $x$, is on or above  $\ell_{13}$ (it could be the case that $p_0=l$). Denote the traversed path between $l$ and  $p_0$ by $\delta_l$. First assume that  $x$ is above $\ell_{13}$. We connect $p_0$ to $x$. Then we obtain a path by applying Lemma~\ref{convex-lemma} on $x$, $p_1$ and all points above $\ell_{13}$. Then we connect $p_1$ to $p_3$ by one edge which also covers $p_2$. Then we extend the current path to a covering path for $P$ by applying Lemma~\ref{convex-lemma} on $p_3$, $r$ and the remaining points below $\ell_{13}$. Now assume that $x$ is on $\ell_{13}$, in which case $x=p_1$. If there is no point of $P'$ above $\ell_{13}$ then we connect $p_1$ to $p_3$ by one edge and then extend it to a covering path for $P$ by applying Lemma~\ref{convex-lemma} on $p_3$, $r$ and the remaining points below $\ell_{13}$. Assume that there are points above $\ell_{13}$. We repeat the above process from $r$ by a counterclockwise walk on the vertices of $\conv{P}$, and due to symmetry, assume that $p_3$ is the first visited vertex that lies on or above $\ell_{13}$. Let $p_4$ denote the vertex of $\conv{P}$ after $p_3$. Notice that $p_4$ lies above $\ell_{13}$. Let $c$ be the intersection point of the lines $\ell(p_0,p_1)$ and $\ell(p_3,p_4)$. To obtain a covering path, we start with $\delta_l$, connect its endpoint $p_0$ to $c$, and connect $c$ to $p_3$; these two edges cover $p_1$ and $p_4$. Then we continue by a path obtained from Lemma~\ref{convex-lemma} applied on $p_3$, $r$ and the remaining points.} 
\end{proof}

The following corollary, although very simple, will be helpful in the analysis of our algorithm.

\begin{corollary}
	\label{26-cor}
	Let $Q$ be a set of at least $22$ points in the plane and let $l$ be its leftmost point. Then there exists a noncrossing covering path for $Q$ with $|Q|-2$ edges that lies to the right of the vertical line through $l$ and has $l$ as an endpoint.
\end{corollary}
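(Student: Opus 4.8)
The plan is to derive Corollary~\ref{26-cor} from Lemma~\ref{27-lemma} by a short reduction, handling the two issues that prevent a direct application: the vertical-line constraint in the lemma requires distinct $x$-coordinates and a designated rightmost point, whereas the corollary only prescribes the leftmost point $l$.

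First I would argue that we may assume no two points of $Q$ share an $x$-coordinate. If some do, apply an infinitesimal rotation of the coordinate axes; since $Q$ is finite this makes all $x$-coordinates distinct, keeps $l$ as the (unique) leftmost point, and does not change which configurations of points are noncrossing, so a covering path obtained after rotation is still valid. (Alternatively, one can simply invoke the general-position / collinearity remarks already made in the proof of Lemma~\ref{27-lemma}.) Having done this, let $r$ be the rightmost point of $Q$. Since $|Q|\geqslant 22$, we have $|Q|\geqslant 23$ unless $|Q|=22$, so I would split into two cases.

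The first case is $|Q|\geqslant 23$: here Lemma~\ref{27-lemma} applies directly to $Q$ and yields a noncrossing covering path with $|Q|-2$ edges, contained in the vertical strip $H$ bounded by the vertical lines through $l$ and $r$, with endpoints $l$ and $r$. In particular this path lies to the right of the vertical line through $l$ and has $l$ as an endpoint, which is exactly what the corollary asserts. The second case is $|Q|=22$: pick any point $q\in Q\setminus\{l,r\}$, set $Q'=Q\setminus\{q\}$ with $|Q'|=21$ --- wait, that is below the threshold, so instead I would go the other direction and \emph{add} a point. Add one new point $q^{\*}$ strictly to the right of $r$ (so $q^{\*}$ becomes the rightmost point and $l$ stays leftmost, and we keep $x$-coordinates distinct); now $|Q\cup\{q^{\*}\}|=23$, and Lemma~\ref{27-lemma} gives a noncrossing covering path $\delta$ for $Q\cup\{q^{\*}\}$ with $21$ edges, contained in the strip between the vertical lines through $l$ and $q^{\*}$, with endpoints $l$ and $q^{\*}$. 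Deleting the last edge of $\delta$ (the one incident to $q^{\*}$) leaves a noncrossing path $\delta'$ with $20=|Q|-2$ edges that still covers all of $Q$ (the deleted edge was only needed to reach $q^{\*}$, which is not in $Q$), has $l$ as an endpoint, and lies to the right of the vertical line through $l$ since $\delta$ did. This completes the case analysis.

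I expect the only real subtlety to be bookkeeping: making sure the added point $q^{\*}$ is placed so that (a) it is the strict rightmost point, (b) it does not collide in $x$-coordinate with any point of $Q$, and (c) after deleting its incident edge the remaining path still covers every point of $Q$ --- the latter is immediate because every point of $Q$ already lies on a vertex or edge of $\delta$ and at most the endpoint $q^{\*}$ and its single incident edge are removed. One could also phrase the whole argument uniformly for all $|Q|\geqslant 22$ by always adjoining a point to the right of the current rightmost point and deleting the corresponding terminal edge at the end, which avoids a case split; I would present it that way for brevity.
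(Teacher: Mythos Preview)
Your approach is essentially the paper's: add a dummy rightmost point, apply Lemma~\ref{27-lemma}, then delete the terminal edge. (Your case split with a direct application of Lemma~\ref{27-lemma} when $|Q|\geqslant 23$ is fine but unnecessary; the paper just always adds the dummy point.)

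There is, however, a genuine gap in your item~(c). You write that after deleting the edge incident to $q^{*}$ ``the remaining path still covers every point of $Q$ --- the latter is immediate because every point of $Q$ already lies on a vertex or edge of $\delta$.'' This is \emph{not} immediate: in a covering path, points may lie in the \emph{interior} of edges, so the deleted edge could in principle carry a point of $Q$ in its interior, and removing it would uncover that point. Nothing in the statement of Lemma~\ref{27-lemma} rules this out. The paper closes this gap by inspecting the construction inside the proof of Lemma~\ref{27-lemma}: in every case the rightmost point $r$ is attached to the rest of the path via a subpath produced by Lemma~\ref{convex-lemma}, which is a \emph{spanning} path and therefore has no input points in the interior of any edge. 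Hence the edge incident to the dummy point has no point of $Q$ in its interior and may safely be removed. You need either this observation or an alternative argument (for instance, placing $q^{*}$ in sufficiently general position and arguing that the penultimate vertex is always an input point) to make your deletion step valid.
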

\begin{proof}
	We add a dummy point $r$ to the right of all points in $Q$. Let $P=Q\cup\{r\}$. We obtain a noncrossing covering path $\delta$ for $P$ with $|P|-2$ edges by Lemma~\ref{27-lemma}.	
	{\color{mycolor}Recall that $r$ is an endpoint of $\delta$. Also recall from the proof of Lemma~\ref{27-lemma} that in all cases $r$ gets connected to $\delta$ by a path that is obtained from Lemma~\ref{convex-lemma}. No edge of this path has a point of $P$ in its interior (even if two consecutive edges happen to be collinear, we treat them as two different edges). Thus, the edge of $\delta$ that covers $r$ has no point in its interior.}
	Therefore, by removing $r$ and its incident edge from $\delta$ we obtain a covering path with $|Q|-2$ edges for $Q$ that satisfies the conditions of the corollary.
\end{proof}

\begin{theorem}
	\label{path-thr}
	Every set of $n$ points in the plane admits a noncrossing covering path
	with at most $\lceil21n/22\rceil -1$ edges. Thus, $\pi(n)\leqslant (1-1/22)n$. Such a path can be computed in $O(n\log n)$ time.
\end{theorem}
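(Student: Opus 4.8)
The plan is to run a left-to-right sweep that repeatedly carves off a block of $23$ points, applies Lemma~\ref{27-lemma} to that block to get a covering path with $|{\rm block}|-2$ edges whose endpoints are the leftmost and rightmost points of the block, and then stitches consecutive block-paths together. Concretely, sort $P=\{v_1,\dots,v_n\}$ by $x$-coordinate (general position lets us assume distinct $x$-coordinates; collinearities are handled as in the remark and Lemma~\ref{27-lemma}). Group the points into consecutive batches of $23$: $B_1=\{v_1,\dots,v_{23}\}$, $B_2=\{v_{23},\dots,v_{45}\}$, and so on, where each batch \emph{overlaps} the previous one in a single point (its leftmost point equals the rightmost point of the preceding batch). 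For each full batch $B_i$, Lemma~\ref{27-lemma} produces a noncrossing covering path $\delta_i$ with $23-2=21$ edges, contained in the vertical strip $H_i$ spanned by $B_i$, with endpoints the leftmost and rightmost points of $B_i$. Since the strips $H_i$ have pairwise disjoint interiors (consecutive strips share only a boundary line) and each $\delta_i$ lives in its own strip, the concatenation $\delta_1\oplus\delta_2\oplus\cdots$ is automatically noncrossing, and it covers every point of $P$ that lies in some batch.

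The counting then gives the bound. With batches of size $23$ overlapping in one point, $k$ full batches cover $22k+1$ points using $21k$ edges; so the "rate" is $21$ edges per $22$ new points, i.e.\ roughly $\tfrac{21}{22}n$. The leftover at the right end — fewer than $23$ new points beyond the last full batch — is handled by Corollary~\ref{26-cor} provided at least $22$ points remain (taking the last shared vertex as the leftmost point $l$ of $Q$); this appends $|Q|-2$ edges for a set $Q$ of up to $23$ points while only paying for $|Q|-1$ new points, which is at least as good as the $21/22$ rate. If fewer than $22$ points remain at the end, we instead enlarge the final batch (or fold the remainder into the last application of the lemma) so that the tail never costs more than one edge per new point, which is again within budget. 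Summing, the total number of edges is at most $\lceil 21n/22\rceil-1$; a careful bookkeeping of the small additive terms (the $-1$, the ceiling, and the $\pm O(1)$ from the shared endpoints) yields exactly this. For the running time: sorting is $O(n\log n)$; each application of Lemma~\ref{27-lemma} operates on a constant-size set ($23$ points) and hence costs $O(1)$ — the Erd\H{o}s--Szekeres extraction of a $5$-cap/$5$-cup, the ray rotations, and the constant number of invocations of Lemma~\ref{convex-lemma} are all $O(1)$ on $23$ points — and there are $O(n)$ batches, so the sweep is $O(n)$ after the sort. Hence $O(n\log n)$ overall, matching the $\Omega(n\log n)$ lower bound of \cite{Dumitrescu2014}.

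The main obstacle is not the idea but making the gluing and the edge-count airtight. First, one must be sure the concatenation is genuinely a path (the endpoints line up: the rightmost point of $B_i$ is by construction the leftmost point of $B_{i+1}$, so $\delta_i$ ends exactly where $\delta_{i+1}$ begins) and genuinely noncrossing — this is where "each $\delta_i\subseteq H_i$ and the $H_i$ have disjoint interiors" does the work, but one should note that two consecutive paths may share a vertex on the common strip boundary without crossing. Second, the endpoint-sharing means every batch after the first contributes one "free" already-covered point, so the naive count $21$ edges per $23$ points must be corrected to $21$ per $22$; getting the ceiling and the $-1$ exactly right requires treating the first batch and the final (possibly short) batch separately and checking small residue cases $n<23$, $n<45$, etc., directly. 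Third, the collinear case needs a sentence pointing back to the collinearity handling inside Lemma~\ref{27-lemma} and Corollary~\ref{26-cor}, plus the observation (already made in the proof of Corollary~\ref{26-cor}) that the shared endpoint never lies in the interior of an edge of the neighboring path, so removing/reusing it causes no trouble. None of these is deep, but they are exactly the places where an off-by-one or a hidden crossing could sneak in, so the write-up should handle them explicitly rather than by "clearly."
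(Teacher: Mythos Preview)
Your proposal is correct and follows essentially the same approach as the paper: sort by $x$-coordinate, partition into overlapping blocks of $23$ sharing one boundary point, apply Lemma~\ref{27-lemma} to each block to get $21$ edges per $22$ new points, concatenate using the disjoint-interior strips to guarantee noncrossing, and clean up the tail with Corollary~\ref{26-cor} (when $22$ points remain) or a cheap monotone path otherwise. The only cosmetic difference is that the paper handles the short tail ($m<22$) with an explicit $x$-monotone path rather than by ``enlarging the final batch,'' but both options yield the same count; your self-flagged bookkeeping points (endpoint sharing, ceiling/$-1$, small $n$) are exactly the ones the paper resolves.
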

\begin{proof}
	Let $P$ be a set of $n$ points in the plane. After a suitable rotation we may assume that no two points of $P$ have the same $x$-coordinate.
	Draw vertical lines in the plane such that each line goes through a point of $P$, there are exactly 21 points of $P$ between any pair of consecutive lines, no point of $P$ lies to the left of the leftmost line, and at most 21 points of $P$ lie to the right of the rightmost line; see Figure~\ref{path-fig3}. Each pair of consecutive lines defines a vertical strip containing 23 points; 21 points in its interior and 2 points on its boundary (the point on the boundary of two consecutive strips is counted for both strips). For the 23 points in each strip we obtain a noncrossing covering path with 21 edges using Lemma~\ref{27-lemma}. Each path lies in its corresponding strip and its endpoints are the two points on the boundary of the strip. By assigning to each strip the point on its left boundary, it turns out that for every 22 points we get a path with 21 edges.
	
	\begin{figure}[htb]
		\centering
		\setlength{\tabcolsep}{0in}
		\includegraphics[width=.65\columnwidth]{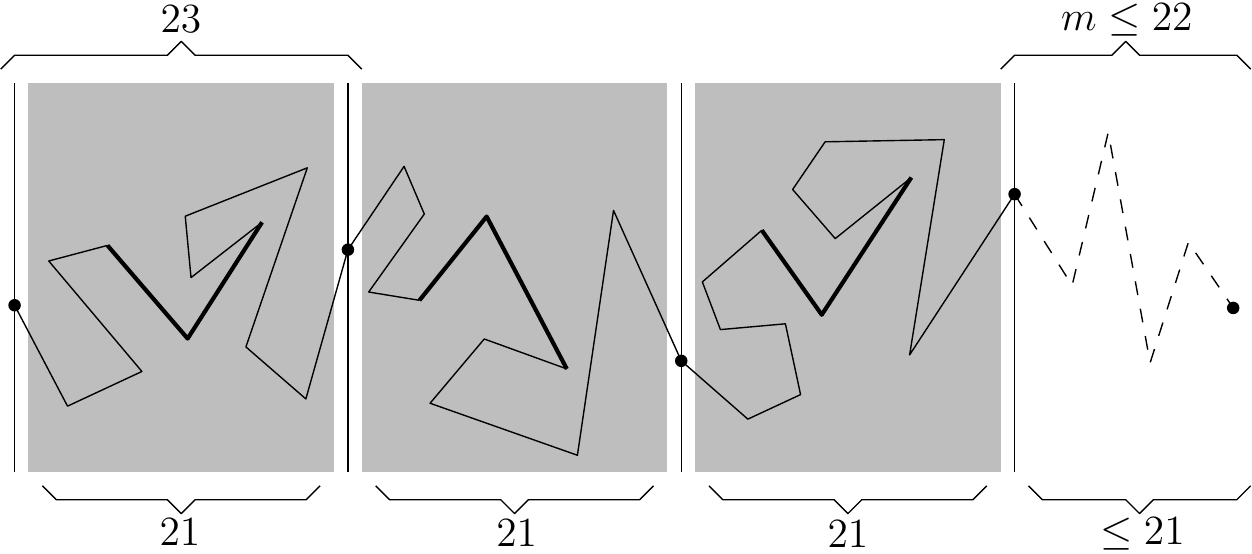}
		\caption{Illustration of the proof of Theorem~\ref{path-thr}.}
		\label{path-fig3}
	\end{figure}

	Let $m$ be the number of points on or to the right of the rightmost line, and notice that $m\leqslant 22$. We distinguish between two cases $m=22$ and $m<22$.
	
	If $m=22$ (in this case $n$ is divisible by $22$) then we cover these $22$ points by a noncrossing path with $20$ edges using Corollary~\ref{26-cor}. The union of this path and the paths constructed within the strips is a noncrossing covering path for $P$. The total number of edges in this path is $21n/22-1$.
	
	 If $m<22$ then $m=n-22\lfloor n/22\rfloor$. In this case we cover the $m$ points by an $x$-monotone path with $m-1$ edges (dashed segments in Figure~\ref{path-fig3}). Again, the union of this path and the paths constructed within the strips is a noncrossing covering path for $P$. The total number of edges in this path is $21\lfloor n/22\rfloor + m -1 =n-\lfloor n/22\rfloor -1=\lceil 21n/22\rceil -1$.
	 
	 Each call to Lemma~\ref{27-lemma} and Corollary~\ref{26-cor} takes constant time. Therefore, after rotating and sorting the points in $O(n\log n)$ time, the rest of the algorithm takes linear time.
\end{proof}

Our path construction in Theorem~\ref{path-thr} achieves a similar bound for covering cycles.

\begin{corollary}
	\label{cycle-cor}
	Every set of $n$ points in the plane admits a noncrossing covering cycle
	with at most $\lceil21n/22\rceil +1$ edges. Such a cycle can be computed in $O(n\log n)$ time.
\end{corollary}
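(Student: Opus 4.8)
The plan is to close the noncrossing covering path $\delta$ furnished by Theorem~\ref{path-thr}, which has at most $\lceil 21n/22\rceil-1$ edges, into a cycle by adding a single new vertex $v$ together with the two edges $ev$ and $vl$, where $l$ and $e$ denote the two endpoints of $\delta$. The resulting closed walk covers $P$ (adding edges never uncovers a point) and has at most $\lceil 21n/22\rceil+1$ edges, so the whole task is to place $v$ so that $ev$ and $vl$ cross neither $\delta$ nor each other.

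First I would read off from the construction of Theorem~\ref{path-thr} the properties of $\delta$ that make this possible. After the rotation used there no two points of $P$ have the same abscissa; let $x_{\min}$ be the abscissa of the leftmost point $l$ of $P$. Then $\delta$ lies in a bounded box $B=[x_{\min},x_{\max}]\times[y_{\min},y_{\max}]$ and has only finitely many vertices (the points of $P$ together with the finitely many Steiner points of Lemma~\ref{27-lemma}); moreover $\delta$ meets the vertical line $x=x_{\min}$ only at $l$ — every edge of $\delta$ outside the leftmost strip lies strictly to the right of that line, inside the leftmost strip $\delta$ leaves $l$ along an edge whose other endpoint lies strictly to the right, and the subpaths supplied by Lemma~\ref{convex-lemma} are spanning paths whose relative interiors stay in open regions disjoint from that line. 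Finally — and this is the key property — the right endpoint $e$ of $\delta$ is the rightmost point of $P$, hence the rightmost vertex of $\delta$ as well: this is immediate from the construction of Theorem~\ref{path-thr} whenever $n$ is not a multiple of $22$, since then $e$ is the last vertex of the $x$-monotone subpath covering the rightmost group of points; in the one remaining case $22\mid n$, where Theorem~\ref{path-thr} invokes Corollary~\ref{26-cor}, I would patch the construction so that the covering path of the last $22$ points again ends at their rightmost point.

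Now set $v=(x_{\min},-M)$ for a sufficiently large $M$ and form the cycle $\delta\oplus ev\oplus vl$. The segment $vl$ lies on the line $x=x_{\min}$, so it meets $\delta$ only at $l$, and it meets $ev$ only at $v$ since $x(e)=x_{\max}>x_{\min}$. For $ev$: its ordinate at abscissa $x$ equals $-M+\tfrac{x-x_{\min}}{x_{\max}-x_{\min}}\,(y(e)+M)$, which drops below $y_{\min}$ — and hence off the box $B$ and off $\delta$ — for every $x<x_{\max}-w$, where $w\to 0$ as $M\to\infty$. Choosing $M$ so large that $w$ is smaller than the gap from $x_{\max}$ to the nearest abscissa of a vertex of $\delta$ lying strictly to its left, the only portion of $\delta$ in the remaining window $x\in(x_{\max}-w,x_{\max}]$ is the unique edge of $\delta$ incident to its endpoint $e$ — because, as recorded above, no edge of $\delta$ reaches the line $x=x_{\max}$ except at $e$. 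That edge and $ev$ share only the point $e$, so they do not cross; and a generic perturbation of $v$ removes any accidental collinear overlap of a new edge with an edge of $\delta$. Hence the cycle is noncrossing, covers $P$, and has $|\delta|+2\le\lceil 21n/22\rceil+1$ edges. Since $\delta$ is computed in $O(n\log n)$ time and the extra work is linear, the same time bound holds.

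The two points I expect to require care are (i) securing the key property of the second paragraph in the case $22\mid n$, i.e., making the right endpoint of $\delta$ also its rightmost vertex — this is exactly what confines the near-vertical edge $ev$ to interacting with $\delta$ only in an arbitrarily small neighbourhood of $e$ — and (ii) the quantitative estimate itself, which forces the appeal to $\delta$ having finitely many vertices and to $\delta$ never returning to the line $x=x_{\min}$ after leaving $l$. Collinear inputs and the degenerate instances with no strips are handled exactly as in Theorem~\ref{path-thr}, with the same two-edge closure appended.
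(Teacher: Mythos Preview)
Your approach is essentially the paper's: close the covering path $\delta$ from Theorem~\ref{path-thr} by adjoining one extra vertex and two edges, using that the endpoints of $\delta$ are the extreme points of the strip containing $\delta$. The only differences are cosmetic---you drop the extra vertex to $(x_{\min},-M)$ whereas the paper sends it to a point of sufficiently large $y$-coordinate---and in the case $22\mid n$, where you write that you ``would patch the construction,'' the paper simply reuses the dummy rightmost point $r$ already introduced inside Corollary~\ref{26-cor}: instead of deleting $r$ and its incident edge one keeps them and chooses $r$ with large $y$-coordinate, so that $r$ itself plays the role of the closing vertex and only one further edge (from $r$ back to the leftmost point of $P$) is needed. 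This is precisely the patch you anticipate, and it sidesteps any separate argument that the last subpath ends at the rightmost input point.
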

\begin{proof}
Let $\delta$ be the path constructed by Theorem~\ref{path-thr} on a point set $P$ of size $n$. Recall $m$ from the proof of this theorem. If $m<22$  then the two endpoints of $\delta$ are the leftmost and rightmost points of $P$. Thus, by introducing a new point $p$ with a sufficiently large $y$-coordinate and connecting it to the two endpoints of $\delta$, we obtain a noncrossing covering cycle for $P$. If $m=22$ then the dummy point that was introduced in Corollary~\ref{26-cor} could be chosen suitably to play the role of $p$. 
\end{proof}

\section{Noncrossing Covering Trees}

In this section we prove the following theorem which gives an algorithm for computing a noncrossing covering tree with roughly $4n/5$ edges. We should clarify that the number of {\em edges} of a tree is different from the number of its {\em segments} (where each segment is either a single edge or a
chain of several collinear edges of the tree). For example the tree in Figure~\ref{rainbow-fig}(b) has 10 edges and 7 segments, where the segments $p_1p_7$ and $p_5p_8$ consist of 3 and 2 collinear edges, respectively.

\begin{theorem}
	\label{tree-thr}
	Every set of $n$ points in the plane admits a noncrossing covering tree
	with at most $\lceil 4n/5\rceil$ edges. Thus, $\tau(n)\leqslant \lceil 4n/5\rceil$. Such a tree can be computed in $O(n\log n)$ time.
\end{theorem}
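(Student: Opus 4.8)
The plan is to mimic the structure of the path algorithm in Theorem~\ref{path-thr}, but with a better ``packing rate'' of points per structural edge, exploiting the fact that a tree has more topological freedom than a path. Concretely, I would first establish a local lemma analogous to Lemma~\ref{27-lemma}: for a point set $P$ confined to a vertical strip $H$ with its two extreme points $l,r$ on the boundary of $H$, and with $|P|$ equal to some fixed constant $N$, there is a noncrossing covering \emph{tree} contained in $H$, having $l$ and $r$ among its leaves (or at least as ``attachment points'' on the strip boundary), and using only $|P|-c$ edges for a constant $c$ large enough that the amortized rate is $4/5$. The gain over the path case comes from using a cap or cup (via the Erd\H{o}s--Szekeres bound from the Preliminaries) so that a long convex chain of, say, $k$ points can be covered by a ``claw'' or a short broom of $O(k/2)$ edges radiating from one apex, rather than by a spanning path on those points; more generally, every pair of consecutive chain edges sharing the apex can be replaced by a single edge through the apex hitting two chain points, which is exactly the trick already used for $p_1,c',p_4$ in Lemma~\ref{27-lemma}, but now we may do this repeatedly because we are not forced to keep a path.

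Second, I would partition the remaining (non-chain) points of the strip into a constant number of convex regions determined by the supporting lines of the chosen chain, exactly as $C_1,C_2,C_3$ are defined in the proof of Lemma~\ref{27-lemma}, and cover the points interior to each region by a noncrossing spanning path using Lemma~\ref{convex-lemma}, with its two endpoints chosen on the boundary of that region so the paths can be hung off the chain structure without crossings. Assembling: the chain contributes roughly half as many edges as it has points, each region contributes $|P_i|+1$ edges, and the few ``connector'' edges contribute $O(1)$; choosing $N$ large enough (the paper's bound $\lceil 4n/5\rceil$ corresponds to saving one edge for every five points, so $N$ around $15$--$25$ should suffice) makes the per-region overhead negligible and yields $|P|-\lceil |P|/5\rceil$ or better inside the strip.

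Third, I would run the global plane-sweep: sort by $x$-coordinate, slice the plane into vertical strips each containing the fixed number $N$ of points (sharing boundary points between consecutive strips), apply the local lemma in each strip, and glue the per-strip trees along the shared boundary points — since each local tree lies inside its strip and touches the boundary only at the two extreme points, the union is automatically noncrossing and is connected (it is a tree because we never create a cycle: distinct strips meet in a single point). The leftover $m<N$ points on the right are handled by a trivial $x$-monotone path on $m-1$ edges, or, if $m$ equals the full strip size, by the tree-analogue of Corollary~\ref{26-cor} using a dummy rightmost point whose incident edge can be deleted. A small ceiling computation then gives the stated $\lceil 4n/5\rceil$, and the running time is $O(n\log n)$ for the sort plus $O(1)$ per strip.

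The main obstacle I anticipate is getting the constant in the local lemma to be sharp enough to hit exactly the $4/5$ rate while keeping the construction noncrossing: the cap/cup guaranteed by Erd\H{o}s--Szekeres has length only $\Theta(\log N)$, so a single convex chain does not cover a constant \emph{fraction} of the strip's points for free — I would instead need to iterate the cap/cup extraction on the convex layers of the point set, or argue that after removing one chain the supporting lines carve off regions whose own spanning paths can again be locally improved, so that the $1/5$ savings accumulates. Making this iteration interact cleanly with the noncrossing and endpoint-placement constraints (so that all the subpaths and sub-brooms can be concatenated in one strip without crossings, and with $l,r$ as the only boundary contacts) is the delicate part; the rest is bookkeeping on edge counts and a routine adaptation of the collinearity handling already sketched at the end of the proof of Lemma~\ref{27-lemma}.
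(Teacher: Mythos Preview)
Your proposal has a genuine gap, and you have essentially identified it yourself in your final paragraph: the Erd\H{o}s--Szekeres chain has length only $\Theta(\log N)$, so one application saves only $O(\log N)$ edges out of $N$, which is the mechanism behind the $(1-1/22)n$ path bound, not anything close to $4n/5$. Your proposed fix---iterating cap/cup extraction over convex layers---is not a plan but a hope: you give no argument that the iterated chains can be covered by brooms while (a) keeping everything noncrossing, (b) keeping the strip-boundary attachment points at $l$ and $r$ only, and (c) accumulating a constant \emph{fraction} of savings. Convex layers can be highly unbalanced (one large layer and many tiny ones), and the supporting lines of successive chains need not nest conveniently, so the ``delicate part'' you flag is in fact the entire content of the theorem.

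The paper's proof is structurally different and avoids Erd\H{o}s--Szekeres altogether. It scans left to right and, in each step, looks at only the next four points $a,b,c,r$ together with the current rightmost covered point $l$. A short case analysis on the shape of $\conv{\{l,a,b,c,r\}}$ (three, four, or five vertices) shows that one can always place a single Steiner point $x$ inside the strip and cover the four new points with three edges meeting at $x$ (a degree-$3$ vertex, which is exactly the freedom a tree buys over a path), \emph{except} in one subcase of the four-vertex situation, where a fifth point $r'$ must be scanned and five points are covered with four edges. This worst subcase is what produces the ratio $4/5$. The key geometric trick is the same ``two collinear pairs meeting at an apex'' move you mention (as in $p_1,c',p_4$), but applied to a fixed-size local configuration rather than to a long Erd\H{o}s--Szekeres chain; that is precisely what makes the savings a constant fraction per step.
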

\begin{proof}
	Let $P$ be a set of $n$ points in the plane. After a suitable rotation we may assume that no two points of $P$ have the same $x$-coordinate.
	We present an iterative algorithm to compute a noncrossing covering tree for $P$ that consists of at most $\lceil4n/5\rceil$ edges. In a nutshell, the algorithm scans the points from left to right and in every iteration (except possibly the last iteration) it considers $4$ or $5$ new points and covers them with $3$ or $4$ new edges, respectively. Thus the ratio of the number of new edges to the number of covered points would be at most $4/5$. 
	We begin by describing an intermediate iteration of the algorithm; the first and last iterations will be described later. {\color{mycolor} We assume that the scanned points in each iteration are in general position. In the end of the proof we describe how to handle collinearities.}	
	Let $m$ be the number of points that have been scanned so far and let $l$ be the rightmost scanned point (our choice of the letter $l$ will become clear shortly).
	We maintain the following invariant at the beginning of every intermediate iteration. 
	
\vspace{8pt}
\begin{center}
\begin{minipage}[c]{0.93\textwidth}
{\bf Invariant. } All the $m$ points that have been scanned so far, are covered by a noncrossing tree $T$ with at most $4\lfloor m/5\rfloor$ edges. The tree $T$ lies to the left of the vertical line through $l$ and the degree of $l$ in $T$ is one.
\end{minipage}
\end{center}
\vspace{8pt}

In the current (intermediate) iteration we scan four new points, namely $a$, $b$, $c$, and $r$ where $r$ is the rightmost point. Let $H$ be the vertical strip bounded by the vertical lines through $l$ and $r$ ($l$ is the leftmost point and $r$ is the rightmost point in $H$); see Figure~\ref{tree-fig}(a). Let $Q=\{l,a,b,c,r\}$. We consider three cases depending on the number of vertices of $\conv{Q}$. Notice that $r$ and $l$ are two vertices of $\conv{Q}$.

\begin{itemize}
	\item {\em $\conv{Q}$ has three vertices.} Let $a$ be the third vertex of $\conv{Q}$. Then $b$ and $c$ lie in the interior of $\conv{Q}$, as in Figure~\ref{tree-fig}(a). In this case two vertices of $\conv{Q}$, say $l$ and $r$, lie on the same side of $\ell(b,c)$. Thus $l$, $b$, $c$, and $r$ form a convex quadrilateral. After a suitable relabeling assume that $l,b,c,r$ appear in this order along the boundary of the quadrilateral. Let $x$ be the intersection point of $\ell(l,b)$ and $\ell(r,c)$, which lies in the triangle $\bigtriangleup lra$. We cover the four scanned points $a$, $b$, $c$, and $r$ by three edges $xl$, $xr$, and $xa$ which lie in $H$. We add these edges to $T$. The degree of $r$ is one in the new tree (no matter which two vertices of $\conv{Q}$ lay on the same side of $\ell(b,c)$). The invariant holds and we proceed to the next iteration.

	\begin{figure}[htb]
		\centering
		\setlength{\tabcolsep}{0in}
		$\begin{tabular}{ccccc}
			\multicolumn{1}{m{.18\columnwidth}}{\centering\vspace{0pt}\includegraphics[width=.17\columnwidth]{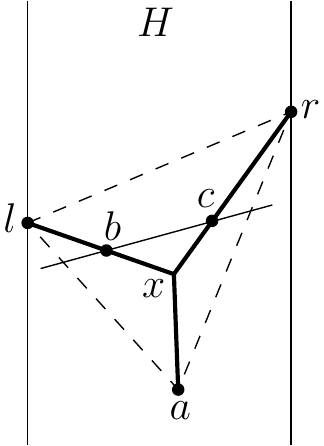}}
			&\multicolumn{1}{m{.18\columnwidth}}{\centering\vspace{0pt}\includegraphics[width=.17\columnwidth]{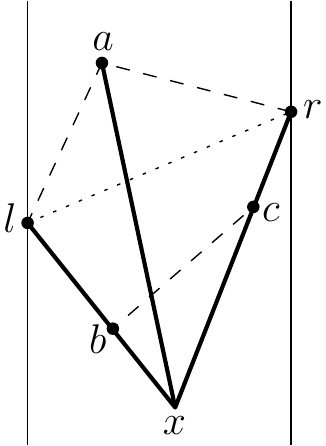}}
			&\multicolumn{1}{m{.2\columnwidth}}{\centering\vspace{0pt}\includegraphics[width=.19\columnwidth]{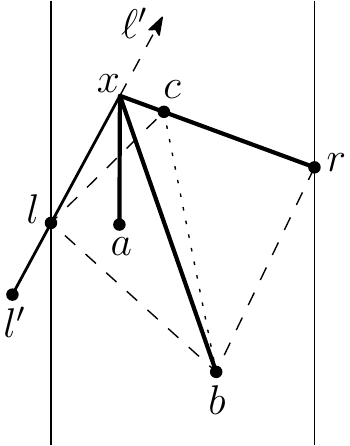}}
			&\multicolumn{1}{m{.2\columnwidth}}{\centering\vspace{0pt}\includegraphics[width=.19\columnwidth]{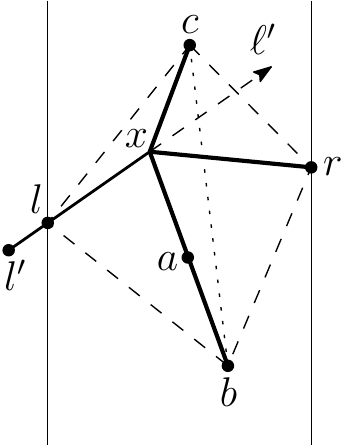}}
			&\multicolumn{1}{m{.24\columnwidth}}{\centering\vspace{0pt}\includegraphics[width=.23\columnwidth]{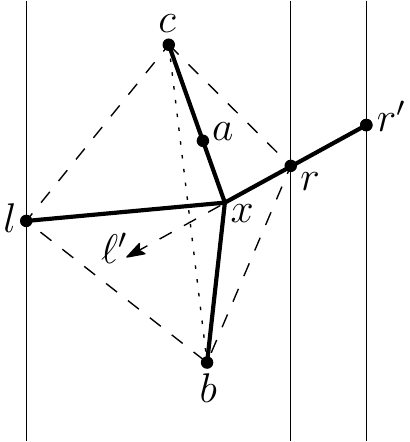}}
			\\
			(a)   &(b) &(c)&(d)&(e)
		\end{tabular}$
		\caption{Illustration of the proof of Theorem~\ref{tree-thr}. (a) $\conv{Q}$ has three vertices. (b) $\conv{Q}$ has five vertices. (c)-(e) $\conv{Q}$ has four vertices.}
		\label{tree-fig}
	\end{figure}

	{\em \item $\conv{Q}$ has five vertices.} We explain this case first as our argument is shorter, and also it will be used for the next case. In this case $Q$ contains a $4$-cap or a $4$-cup with endpoints $l$ and $r$. After a suitable reflection and relabeling assume it has the $4$-cup $l,b,c,r$ as in Figure~\ref{tree-fig}(b). Let $x$ be the intersection point of $\ell(l,b)$ and $\ell(r,c)$, and observe that it lies in $H$. We cover $a$, $b$, $c$, and $r$ by three edges $xl$, $xr$, and $xa$ which lie in $H$. We add these edges to $T$. The degree of $r$ is one in the new tree. The invariant holds for the next iteration. 
	{\em \item $\conv{Q}$ has four vertices.} After a suitable relabeling assume that $b$ and $c$ are two vertices of $\conv{Q}$ (other than $l$ and $r$). Thus $a$ lies in the interior of $\conv{Q}$. If both $b$ and $c$ lie above or below $\ell(l,r)$ then $l,b,c,r$ form a $4$-cap or a $4$-cup, in which case we cover the points as in the previous case. Therefore we may assume that one point, say $b$, lies below $\ell(l,r)$ and $c$ lies above $\ell(l,r)$ as in Figures~\ref{tree-fig}(c)-(e). We consider two subcases.
	
	\begin{itemize}
		\item {\em $a$ lies in the triangle $\bigtriangleup lbc$.} By the invariant, $l$ has degree one in $T$. Let $l'$ be the neighboring vertex of $l$ in $T$. Let $\ell'$ be the ray emanating from $l'$ and passing through $l$. We consider two subcases: (i) the segment $bc$ does not intersect $\ell'$ and (ii) the segment $bc$ intersects $\ell'$.
		
		In case (i) the segment $bc$ lies below or above $\ell'$. By symmetry assume that it lies below $\ell'$. Then $a$ and $r$ also lie below $\ell'$, as in Figure~\ref{tree-fig}(c). In this case $\ell(r,c)$ intersects $\ell'$. Let $x$ be their intersection point, and observe that it lies in $H$. We replace the edge $l'l$ of $T$ by $l'x$ (this does not increase the number of edges because $l$ has degree one). Notice that $l'x$ contains $l$. Then we cover $a$, $b$, $c$, and $r$ by adding three edges $xr$, $xb$, and $xa$ to $T$.   Therefore the number of edges of $T$ is increased by $3$. Moreover, $r$ has degree one in the new tree, and all the newly introduced edges lie to the left of the vertical line through $r$. Thus the invariant holds for the next iteration.

		In case (ii) the ray $\ell'$ goes through $\bigtriangleup lbc$. The point $a$ lies below or above $\ell'$. By symmetry assume that it lies below $\ell'$, as in Figure~\ref{tree-fig}(d). Let $x$ be the intersection point of $\ell(a,b)$ and $\ell'$, which lies in $\bigtriangleup lbc$. We replace the edge $l'l$ of $T$ by $l'x$. Then we cover $a$, $b$, $c$, and $r$ by adding three edges $xr$, $xb$, and $xc$ to $T$. Thus, the number of edges of $T$ is increased by $3$, the vertex $r$ has degree one in the new tree, and all new edges lie to the left of the vertical line through $r$. The invariant holds for the next iteration.

		\item {\em $a$ lies in the triangle $\bigtriangleup rbc$.} Here is the place where we use four new edges to cover five vertices. In fact the ratio $4/5$ comes from this case (In previous cases we were able to cover four points by three new edges). In this case we scan the next point after $r$ which we denote by $r'$, as in Figure~\ref{tree-fig}(e). Now let $\ell'$ be the ray emanating from $r'$ and passing through $r$. The current setting is essentially the vertical reflection of the previous case where $r$ and $r'$ play the roles of $l$ and $l'$, respectively. We handle this case analogous to the previous case. Our analysis is also analogous except that now we consider the edge $r'x$ as a new edge. Thus we use four new edges to cover five points $a$, $b$, $c$, $r$, and $r'$. All new edges lie to the left of the vertical line through $r'$, and the degree of $r'$ is one in the new tree. Thus the invariant holds for the next iteration. 
	\end{itemize}
\end{itemize}
This is the end of an intermediate iteration. The noncrossing property of the resulting tree follows from our construction.  This iteration suggests a covering tree with roughly $4n/5$ edges. To get the exact claimed bound we need to have a closer look at the first and last iterations of the algorithm.

For the first iteration of the algorithm we scan only the leftmost input point. This point will play the role of $l$ for the second iteration (which is the first intermediate iteration). The invariant holds for the second iteration because the tree has no edges at this point. If we happen to use the edge $l'l$ in the second iteration, then we take $l'=l$ and give the ray $\ell'$ an arbitrary direction to the right. Based on the above construction this could happen only when we scan four points ($a$, $b$, $c$, $r$) in the second iteration. In this case the first five points ($l$, $a$, $b$, $c$, $r$) are covered by four edges, and thus the invariant holds for the following iteration.
In the last iteration of the algorithm we are left with $w\leqslant 4$ points that are not being scanned. We connect these $w$ points by $w$ edges to the rightmost scanned point. 
Therefore, the algorithm covers all points by a noncrossing tree with at most $\lceil4n/5\rceil$ edges.

{\color{mycolor} If three or more of the scanned points are collinear then cover all collinear points by one edge and connect the left endpoint of this edge to $l$. Then we connect every remaining scanned point to $l$. The number of new edges is at most 3 (for 4 scanned points) and 4 (for 5 scanned points).}

Each iteration takes constant time. Therefore, after rotating and sorting the points in $O(n\log n)$ time, the rest of the algorithm takes linear time.
\end{proof}

\section{Perfect Rainbow Polygons}
\label{polygon-section}
Recall that a perfect rainbow polygon for a set of colored points, is a simple polygon that contains exactly one point of each color in its interior or on its boundary. Figure~\ref{rainbow-fig}(a) shows a perfect rainbow polygon of size 9 (nine edges) for an 8-colored point set (i.e. colored by 8 different colors).  There is a relation (as described below) between rainbow polygons and noncrossing covering trees. We employ this relation (similar to \cite{Flores-Penaloza21}) and present an algorithm that achieves a perfect rainbow polygon of size at most $7k/5+O(1)$ for any $k$-colored point set. 

\begin{figure}[htb]
	\centering
	\setlength{\tabcolsep}{0in}
	$\begin{tabular}{cc}
		\multicolumn{1}{m{.35\columnwidth}}{\centering\vspace{0pt}\includegraphics[width=.27\columnwidth]{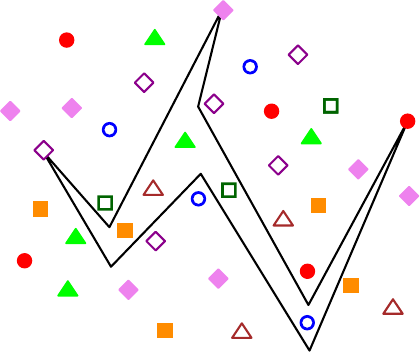}}
		&\multicolumn{1}{m{.65\columnwidth}}{\centering\vspace{0pt}\includegraphics[width=.6\columnwidth]{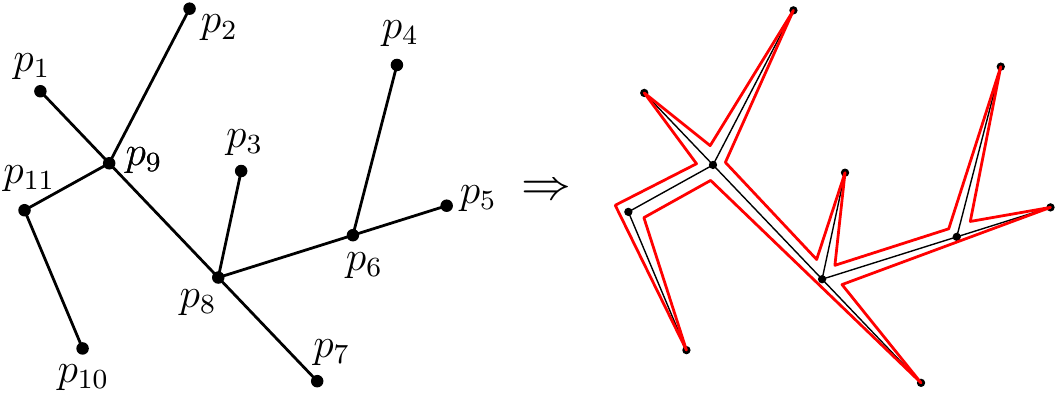}}
		\\
		(a) &(b) 
	\end{tabular}$
	\caption{(a) A perfect rainbow polygon of size $9$ for an $8$-colored point set. (b) Left: A noncrossing tree with ten edges that can be partitioned into seven segments $\{p_1p_7,p_2p_9,p_3p_8, p_4p_6,p_5p_8,\allowbreak p_{10}p_{11},\allowbreak p_{11}p_9\}$;
		$p_6$ and $p_8$ are two forks with multiplicity $1$ and $p_9$ is a fork with multiplicity $2$. Right: Obtaining a simple enclosing polygon from the tree.}
	\label{rainbow-fig}
\end{figure}

We adopt the following notation and definitions from \cite{Flores-Penaloza21}.
Let $T$ be a noncrossing geometric
tree. Recall that a segment of $T$ is a chain of collinear edges in $T$. 
Let $M$ be a partition of the edges of $T$ into a minimal number of pairwise noncrossing segments. Let $s$ denote the number of segments in $M$. A {\em fork} of $T$ (with respect to $M$) is a vertex $f$ that lies in the interior of a segment $ab\in M$ and it is an endpoint of another segment of $M$. The {\em multiplicity} of $f$ is a number in $\{1,2\}$ that is determined as follows. If the segments that have $f$ as an endpoint lie on both sides of $\ell(a,b)$ then $f$ has multiplicity 2, otherwise (the segments lie on one side of the line) $f$ has multiplicity 1. See the tree in Figure~\ref{rainbow-fig}(b) for an example. Let $t$ denote the sum of multiplicities of all forks in $T$.
The following lemma expresses the size of
a polygon enclosing $T$ in terms of $s$ and $t$. 

\begin{lemma}[Flores{-}Pe{\~{n}}aloza~\etal~\cite{Flores-Penaloza21}]
\label{tree-polygon-lemma}
 Let $T$ be a noncrossing geometric tree and $M$ be a partition of its edges into a minimal number of pairwise noncrossing segments. Let $s$ be the number of segments in $M$ and $t$ be the total multiplicity of forks in $T$. If $s\geqslant 2$ and $t\geqslant 0$, then for every $\varepsilon > 0$ there exists a simple polygon of size $2s + t$ and of area at most $\varepsilon$ that encloses $T$.
\end{lemma}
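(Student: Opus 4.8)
The plan is to realize $\Pi$ as a tiny outward offset of the unique face of $T$. Embed $T$ in the plane; since $T$ is a tree its complement is connected, so $T$ bounds a single face $F$, and the facial walk $W$ around $\partial F$ traverses each edge of $T$ exactly twice, once in each direction. Fix a small $\delta>0$ and push $W$ a distance $\delta$ into $F$ --- equivalently, stay always to the left of the direction of travel, which along $W$ points consistently into $F$. This replaces each edge $e$ of $T$ by two line segments parallel to $e$ at distance $\delta$ on the two sides of $e$; we then close these pieces up into a single polygon $\Pi=\Pi_\delta$ by adding one short ``cap'' segment beyond each leaf of $T$, and by joining consecutive offset pieces around each vertex $v$ of $T$, which requires one new polygon vertex per join except that a join between two collinear pieces requires none. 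Because $T$ is connected and simply connected, for small $\delta$ this is a \emph{simple} closed polygon; the offset never reaches $T$, so $\Pi$ encloses $T$; and $\mathrm{area}(\Pi)=O(\delta)$ (with constant depending on $T$), which is made $\le\varepsilon$ by choosing $\delta$ small enough.

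It then remains to count the vertices of $\Pi$, which are precisely the turns of $W$: one cap vertex at each leaf of $T$, and, at each internal vertex $v$, one vertex for each pair of cyclically consecutive edges of $T$ at $v$ that is \emph{not} collinear through $v$ (a collinear pair is a ``straight pass-through'' and contributes nothing). So I would split the internal vertices into three types. A degree-$2$ vertex lying in the interior of a segment of $M$ has two corners, both straight pass-throughs, and contributes $0$. A \emph{branch vertex} of degree $d$, i.e.\ an endpoint of $d$ segments of $M$ no two of which are collinear through it, has $d$ corners and no straight pass-through, contributing $d$. And a fork $v$, lying in the interior of a segment $\sigma$ and being an endpoint of $j_v=\deg(v)-2$ further segments of $M$, contributes $j_v+1$ if $\mathrm{mult}(v)=1$ (exactly one corner is a straight pass-through, namely the one on the side of the line through $\sigma$ carrying no emanating segment) and $j_v+2$ if $\mathrm{mult}(v)=2$ (no corner is a straight pass-through) --- that is, $j_v+\mathrm{mult}(v)$ in both cases.

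Summing over all vertices of $T$,
\[
|\Pi|=\#\{\text{leaves}\}+\sum_{\text{branch }v}\deg(v)+\sum_{\text{forks }v}\bigl(j_v+\mathrm{mult}(v)\bigr)
=\Bigl(\#\{\text{leaves}\}+\sum_{\text{branch }v}\deg(v)+\sum_{\text{forks }v}j_v\Bigr)+t.
\]
The parenthesized quantity counts, for every segment of $M$, each of its two endpoints exactly once --- an endpoint of a segment is a leaf, a branch vertex (contributing one such endpoint for each of its $\deg(v)$ incident segments), or a fork (contributing one for each of its $j_v$ emanating segments), while a degree-$2$ interior vertex of a segment is never a segment endpoint --- so it equals $2s$. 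Hence $|\Pi|=2s+t$. (The assumption $s\ge 2$ is used only to guarantee $2s+t\ge 4$, so that $\Pi$ is a genuine polygon rather than the degenerate ``$2$-gon'' a single straight segment would give.)

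The step I expect to be the real obstacle is the first one: verifying that the consistent inward offset of the facial walk, together with the leaf caps and corner joins, is genuinely a \emph{simple} polygon --- so that no additional vertices are forced in order to undo self-crossings --- while still enclosing every vertex and edge of $T$. This is where one uses that $T$ is a plane tree, so that its $\delta$-neighbourhood is a topological disk and $W$ becomes a single simple closed curve once pushed off of $T$, and where one checks that the shortcuts at leaves and at convex corners stay on the far side of $T$. Once this is established the vertex count is bookkeeping, the only delicate case being the fork, which is exactly where the definition of multiplicity is tailored to make the contribution equal $j_v+\mathrm{mult}(v)$.
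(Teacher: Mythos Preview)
The paper does not prove this lemma: it is quoted verbatim from Flores{-}Pe{\~{n}}aloza \emph{et al.}\ and accompanied only by two sentences of intuition (``cut out $T$ from the plane'' or ``walk around $T$ arbitrarily close to its edges''). Your proposal is exactly a fleshed-out version of that second intuition, so in spirit you are doing precisely what the paper gestures at.

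Your vertex count is correct and is the substantive part of the argument. The key identity
\[
\#\{\text{leaves}\}+\sum_{\text{branch }v}\deg(v)+\sum_{\text{forks }v}j_v \;=\; 2s
\]
is justified as you say: each of the $2s$ segment endpoints is accounted for once, and minimality of $M$ guarantees that at a branch vertex no two incident edges are collinear (otherwise their segments could be merged without creating a crossing), so a branch vertex really does contribute $\deg(v)$ turns. The fork analysis---one straight pass-through when $\mathrm{mult}(v)=1$, none when $\mathrm{mult}(v)=2$---is also right, and one can check that at a fork no \emph{other} pair of consecutive edges can be collinear (two collinear emanating edges would lie on opposite sides of the through-segment and hence could not be cyclically adjacent).

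One small wording slip: you say you add ``one short cap \emph{segment}'' at each leaf but then count ``one cap \emph{vertex}''. Parallel $\delta$-offsets of a single edge never meet, so a literal perpendicular cap segment would cost two vertices, not one. What you want (and what your count reflects) is a single apex vertex just beyond the leaf, with the two incident polygon edges slightly tilted toward it rather than kept exactly parallel. This does not affect enclosure or the area bound.

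Your caveat about simplicity is well placed but not an obstacle: because $T$ is a plane tree, its $\delta$-neighbourhood is a topological disk for small $\delta$, and the perturbed facial walk is precisely its boundary; replacing each convex/reflex arc by the corresponding polygon corner keeps the curve simple and on the correct side of $T$.
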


There are simple intuitions behind Lemma~\ref{tree-polygon-lemma}. For example if we cut out the tree $T$ from the plane, then the resulting hole could be expressed as a desired polygon. Alternatively, if we start from a vertex of $T$ and walk around $T$ (arbitrary close to its edges) until we come back to the starting vertex, then the traversed tour could be represented as a desired polygon. See Figure~\ref{rainbow-fig}(b).

In view of Lemma~\ref{tree-polygon-lemma}, a better covering tree (i.e. for which $2s+t$ is smaller) leads to a better polygon (i.e. with fewer edges). 
In Theorem~\ref{tree-polygon-thr} (proven in Section~\ref{better-tree-section}) we show that any set of $k$ points in the plane in general position admits a noncrossing covering tree for which $s\leqslant\lceil \frac{3k}{5}\rceil +2$ and  $t\leqslant\lceil \frac{k}{5}\rceil$.
With this lemma and theorem in hand, we present our algorithm for computing a perfect rainbow polygon.

\paragraph{Algorithm {\normalfont (in a nutshell)}.}
Let $P$ be a set of $n$ points in the plane in general position that are colored by $k$ distinct colors. The algorithm picks one point from each color (arbitrarily), covers the chosen points by a noncrossing tree (using Theorem~\ref{tree-polygon-thr}), and then obtains a perfect rainbow polygon from the tree (using Lemma~\ref{tree-polygon-lemma}).

\paragraph{Analysis.}Let $K$ be the set of $k$ chosen points, and let $T$ be the covering tree for $K$ obtained by Theorem~\ref{tree-polygon-thr}. Then $s\leqslant\lceil \frac{3k}{5}\rceil +2 $ and $t\leqslant\lceil \frac{k}{5}\rceil$. Thus, the perfect rainbow polygon obtained by Lemma~\ref{tree-polygon-lemma} has size 
\[2s+t\leqslant 2\left(\left\lceil \frac{3k}{5}\right\rceil +2\right)+\left\lceil \frac{k}{5}\right\rceil\leqslant \left\lceil \frac{7k}{5}\right\rceil +6.\] 
The tree $T$ can be obtained in $O(k\log k)$ time, by Theorem~\ref{tree-polygon-thr}. To obtain a polygon (avoiding points of $P\setminus K$) from $T$ we need to choose a suitable $\varepsilon$ in Lemma~\ref{tree-polygon-lemma}. As noted in \cite{Flores-Penaloza21}, half of the minimum distance between the edges of $T$ and the points of $P\setminus K$ is a suitable $\varepsilon$, which can be found in $O(n\log n)$ time by computing the Voronoi diagram of the edges of $T$ together with the points of $P\setminus K$. Thus the total running time of the algorithm is $O(n\log n)$.
The following theorem summarizes our result in this section.

\begin{theorem}
	\label{polygon-thr}
	Every $k$-colored point set of size $n$ in the plane in general position admits a perfect rainbow polygon of size at most $\lceil 7k/5\rceil + 6$. Thus, $\rho(k)\leqslant \lceil 7k/5\rceil + 6$. Such a polygon can be computed in $O(n \log n)$ time.
\end{theorem}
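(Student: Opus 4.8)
The plan is to carry out exactly the reduction sketched above: replace the $k$-colored point set by one representative per color, cover those representatives by the economical tree supplied by Theorem~\ref{tree-polygon-thr}, and then thicken that tree into a thin simple polygon via Lemma~\ref{tree-polygon-lemma}. The size bound is then pure arithmetic and the time bound follows from the running times already attached to Theorem~\ref{tree-polygon-thr} and to the Voronoi-diagram computation of the thinness parameter.

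First I would pick, arbitrarily, one point from each of the $k$ color classes and call the resulting $k$-point set $K$; since $P$ is in general position so is $K$. By Theorem~\ref{tree-polygon-thr}, $K$ admits a noncrossing covering tree $T$, with a minimal segment partition $M$, for which the number of segments satisfies $s\leqslant\lceil 3k/5\rceil+2$ and the total fork multiplicity satisfies $t\leqslant\lceil k/5\rceil$. If $K$ has at least three points they are non-collinear, so $T$ cannot be a single segment and $s\geqslant 2$; the residual cases with $s\leqslant 1$ (which force $k\leqslant 2$) are trivial, as a tiny triangle or a flat quadrilateral around the chosen point(s) is already a perfect rainbow polygon of size at most $4\leqslant\lceil 7k/5\rceil+6$. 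Hence we may assume $s\geqslant 2$, so that Lemma~\ref{tree-polygon-lemma} applies to $T$ and $M$.

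Next I would fix the thinness parameter $\varepsilon$. Following \cite{Flores-Penaloza21}, let $\varepsilon$ be half the minimum distance between the edges of $T$ and the points of $P\setminus K$; this quantity is positive after, if necessary, an infinitesimal perturbation of the Steiner vertices of $T$ (such a perturbation changes neither $s$ nor $t$, and an edge of $T$ joining two points of $K$ already contains no other point of $P$ by general position). Applying Lemma~\ref{tree-polygon-lemma} with this $\varepsilon$ produces a simple polygon $\Pi$ of size $2s+t$ and area at most $\varepsilon$ that encloses $T$. Since $\Pi$ encloses $T$ and every point of $K$ lies on $T$, the polygon $\Pi$ contains all $k$ points of $K$, hence at least one point of each color; and since, as noted in \cite{Flores-Penaloza21}, the enclosing polygon stays within the chosen $\varepsilon$ of $T$ while $\varepsilon$ is strictly less than the distance from any point of $P\setminus K$ to $T$, no point of $P\setminus K$ lies in $\Pi$. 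Therefore $\Pi$ contains exactly one point of each color and is a perfect rainbow polygon.

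It remains to collect the estimates. The size of $\Pi$ is
\[
2s+t\;\leqslant\;2\left(\left\lceil\frac{3k}{5}\right\rceil+2\right)+\left\lceil\frac{k}{5}\right\rceil\;\leqslant\;\left\lceil\frac{7k}{5}\right\rceil+6,
\]
which gives $\rho(k)\leqslant\lceil 7k/5\rceil+6$. For the running time, choosing the representatives and constructing $T$ takes $O(k\log k)$ time by Theorem~\ref{tree-polygon-thr}; computing $\varepsilon$ is a single nearest-site query against the Voronoi diagram of the $O(k)$ edges of $T$ together with the $n-k$ points of $P\setminus K$, which costs $O(n\log n)$ time; and tracing $\Pi$ around $T$ for the chosen $\varepsilon$ is linear. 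The total is $O(n\log n)$. The genuine difficulty of this section is not in the present statement — which is bookkeeping on top of Lemma~\ref{tree-polygon-lemma} plus the routine thin-polygon separation argument for the remaining points of $P\setminus K$ — but in Theorem~\ref{tree-polygon-thr}, namely showing that a single covering tree can simultaneously keep the segment count near $3k/5$ and the total fork multiplicity near $k/5$; that is carried out in Section~\ref{better-tree-section}.
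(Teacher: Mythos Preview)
Your proposal is correct and follows essentially the same route as the paper: select one representative per color, build the covering tree of Theorem~\ref{tree-polygon-thr}, apply Lemma~\ref{tree-polygon-lemma} with $\varepsilon$ equal to half the minimum distance from the edges of $T$ to $P\setminus K$ (computed via a Voronoi diagram in $O(n\log n)$ time), and read off the size bound $2s+t\leqslant\lceil 7k/5\rceil+6$. Your extra care about the degenerate case $s\leqslant 1$ and about Steiner vertices lying on points of $P\setminus K$ is not present in the paper but is harmless; in fact the construction in Theorem~\ref{tree-polygon-thr} always starts with two segments, so $s\geqslant 2$ holds automatically for $k\geqslant 2$.
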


\paragraph{Remark.} The general position assumption is necessary for our algorithm because if a non-selected point (i.e. a point of $P\setminus K$) lies on a segment of $T$ then the resulting polygon is not a valid rainbow polygon as it contains two or more points of the same color.
\subsection{A Better Covering Tree}
\label{better-tree-section}
Recall parameters $s$ and $t$ from the previous section. In this section we construct a covering tree for which $2s+t$ is smaller (compared to that of \cite{Flores-Penaloza21}). During the construction we will illustrate (in Figure~\ref{polygon-fig}) the structure of the polygon that is being obtained from the tree; this helps the reader to see that the polygon obtains 7 edges for every 5 points, and thus verify the bound $7k/5+O(1)$ intuitively. Our construction shares some similarities with the construction in our proof of Theorem~\ref{tree-thr}. However, the details of the two constructions are different because they have different objectives. We describe the shared parts briefly.

\begin{theorem}
	\label{tree-polygon-thr}
Let $K$ be a set of $k$ points in the plane in general position. Then, in $O(k \log k)$ time, one can construct a noncrossing covering tree for $K$ consisting of at most $\lceil \frac{3k}{5}\rceil +2 $ pairwise noncrossing segments with at most $\lceil \frac{k}{5}\rceil$ forks of multiplicity $1$.
\end{theorem}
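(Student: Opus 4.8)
The plan is to adapt the left-to-right sweep of Theorem~\ref{tree-thr}, but now tracking the two quantities $s$ (number of segments) and $t$ (total multiplicity of forks) rather than the raw edge count, and arranging the construction so that each intermediate iteration consumes $5$ new points while adding at most $3$ new segments and at most $1$ new fork of multiplicity~$1$ (and never a fork of multiplicity~$2$). Over $k/5$ iterations this yields $s\leqslant 3k/5 + O(1)$ and $t\leqslant k/5 + O(1)$, with the $+2$ and the ceilings absorbing the first and last iterations. As in Theorem~\ref{tree-thr} I would maintain an invariant: the scanned points are covered by a noncrossing tree $T$ lying to the left of the vertical line through the rightmost scanned point $l$, with $l$ of degree one in $T$; but I would strengthen the invariant to say that the edge of $T$ incident to $l$ is the end of a segment of the current partition whose interior contains no already-placed fork-relevant structure — i.e.\ so that ``extending through $l$'' by replacing the edge $l'l$ with $l'x$ (as was done in Theorem~\ref{tree-thr}) does not create a fork at all, it just lengthens an existing segment. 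This is the key bookkeeping twist: in Theorem~\ref{tree-thr} replacing $l'l$ by $l'x$ was free in edge-count; here it must also be free (or nearly so) in the $2s+t$ accounting.

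Concretely, in an intermediate iteration I would scan $Q=\{l,a,b,c,r\}$ (with $r$ rightmost) and case on $|\conv Q|$, exactly paralleling Theorem~\ref{tree-thr}. When $\conv Q$ is a triangle or a $4$-cup/$4$-cap, the three edges $xl,xr,xa$ through the apex $x$ form only \emph{two} segments ($lxr$ is one straight segment, $xa$ another), $x$ is a fork of multiplicity~$1$ on segment $lr$, and $r$ has degree one — so this iteration costs $2$ new segments (one of which absorbs the old segment ending at $l$, since $xl$ merely extends it) and $1$ fork of multiplicity~$1$ for $5$ points (counting $l$ as shared). In the hard subcase of Theorem~\ref{tree-thr} — $\conv Q$ a quadrilateral with $b$ below, $c$ above $\ell(l,r)$, and $a$ inside $\bigtriangleup rbc$ — one scans a sixth point $r'$ and uses four edges; I would check that these four edges decompose into $3$ segments with a single multiplicity-$1$ fork, keeping the ratio $3$ segments / $5$ points and $1$ fork / $5$ points. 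Throughout, choosing the apex $x$ \emph{on the ray $\ell'$ through the old edge at $l$} is what guarantees the old segment and the new one merge, so no fork is created at $l$ and the ``$+2$'' slack is only spent at the boundary iterations.

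The main obstacle, as in the source construction, will be verifying that in every case the newly added segments lie on a single side of the relevant line so that every fork that is created has multiplicity~$1$ and never~$2$; multiplicity-$2$ forks would cost an extra unit of $t$ each and break the $\lceil k/5\rceil$ bound. A secondary obstacle is the interaction between iterations: when the edge incident to $l$ gets extended through $l$ in the next iteration, one must ensure the point at the far end (the old $l'$, or a still earlier vertex if several extensions chain up) does not itself become a fork of the wrong multiplicity, and that collinearities introduced by these extensions are counted correctly as part of one segment. The first iteration (scan the single leftmost point, so $T$ has no edges and the ``extend through $l$'' move is vacuous, giving $4$ edges / $2$ segments / $1$ fork for the first $5$ points) and the last iteration (at most $4$ leftover points, each joined to the rightmost scanned point by its own edge, adding at most $4$ segments and $0$ forks) contribute the additive constants, and I would close by collecting the per-iteration bounds into $s\leqslant\lceil 3k/5\rceil+2$ and $t\leqslant\lceil k/5\rceil$ and noting the running time is $O(k\log k)$ for the initial sort, $O(1)$ per iteration. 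Handling three or more collinear scanned points is done as in Theorem~\ref{tree-thr}: cover the collinear block by one segment and attach the rest to $l$, which only helps both parameters.
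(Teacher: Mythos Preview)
Your accounting has a concrete error that breaks the bound. In the triangle and $4$-cup/$4$-cap cases you claim the three edges $xl,xr,xa$ form only two segments because ``$lxr$ is one straight segment,'' but $x$ is the intersection of $\ell(l,b)$ and $\ell(r,c)$: the edge $xl$ lies on $\ell(l,b)$ and $xr$ lies on $\ell(r,c)$, so they are not collinear. All three of $xl,xr,xa$ are separate segments, $x$ is an endpoint of each (hence not a fork), and the hoped-for fork at $x$ does not exist. Similarly, ``$xl$ merely extends the old segment ending at $l$'' is false in these cases: the old edge runs along $\ell(l',l)$ while $xl$ runs along $\ell(l,b)$; the extension trick from Theorem~\ref{tree-thr} is only available in the subcases where $x$ was deliberately placed on the ray $\ell'$. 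With correct bookkeeping the Theorem~\ref{tree-thr} construction yields at best $3$ new segments per $4$ new points in the easy cases and $4$ per $5$ in the hard case, so $s\approx 3k/4$ and $2s+t\approx 3k/2$, which misses the $7k/5$ target.

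The paper gets $3$ segments and $1$ fork per $5$ points by a connection mechanism your proposal does not contain. Each iteration scans five \emph{fresh} points $l,a,b,c,r$ (the previous rightmost point $l'$ is kept separate, not reused as $l$), and instead of trying to merge with the old edge at $l'$, the new subtree is attached by shooting a ray---from the apex $x$ through $l$ in the easy cases, or along suitable lines in the quadrilateral subcases---until it first hits the existing tree $T$ in the \emph{interior} of one of its segments, at a point $f$. This creates exactly one fork of multiplicity~$1$ at $f$ and costs no extra segment for the connection, so the iteration adds three segments (e.g.\ $xf,xr,xa$, with $l,b$ on $xf$ and $c$ on $xr$). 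To guarantee such a ray always meets $T$, the first iteration plants a long vertical segment through the leftmost point of $K$; this initial scaffolding is what produces the additive $+2$.
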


\begin{proof}
	After a suitable rotation assume that no two points of $K$ have the same $x$-coordinate.
	We present an iterative algorithm that scans the points from left to right. In every iteration (except possibly the last iteration) we scan $5$ new points and cover them with $3$ new segments and $1$ new fork with multiplicity $1$. 
	As before, we start by describing an intermediate iteration of the algorithm; the first and last iterations will be described later. Let $m$ be the number of points scanned so far and let $l'$ be the rightmost scanned point.
	We maintain the following invariant at the beginning of every intermediate iteration. 
	
	\vspace{8pt}
	\begin{center}
		\begin{minipage}[c]{0.93\textwidth}
			{\bf Invariant. } All the $m$ points that have been scanned so far, are covered by a noncrossing tree $T$ with at most $3\lfloor (m-2)/5\rfloor+2$ segments with $\lfloor (m-2)/5\rfloor+1$ forks of multiplicity $1$. The tree $T$ lies to the left of the vertical line through $l'$ and the degree of $l'$ in $T$ is one.
		\end{minipage}
	\end{center}
	\vspace{8pt}
	
In the first iteration, which will be described later, we add to $T$ a long vertical segment through the leftmost point of $K$ such that the extension of any other segment of $T$ hits this segment.
	
	In the current (intermediate) iteration we scan the next five points, namely $l$, $a$, $b$, $c$, and $r$ where among them $l$ is the leftmost and $r$ is the rightmost. Let $H$ be the vertical strip bounded by the vertical lines through $l$ and $r$. Let $Q=\{l,a,b,c,r\}$. We consider three cases.

	\begin{figure}[htb]
	\centering
	\setlength{\tabcolsep}{0in}
	$\begin{tabular}{cccc}
		\multicolumn{1}{m{.25\columnwidth}}{\centering\vspace{0pt}\includegraphics[width=.23\columnwidth]{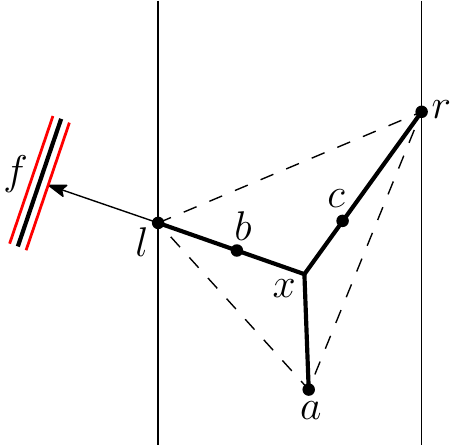}}
		&\multicolumn{1}{m{.25\columnwidth}}{\centering\vspace{0pt}\includegraphics[width=.22\columnwidth]{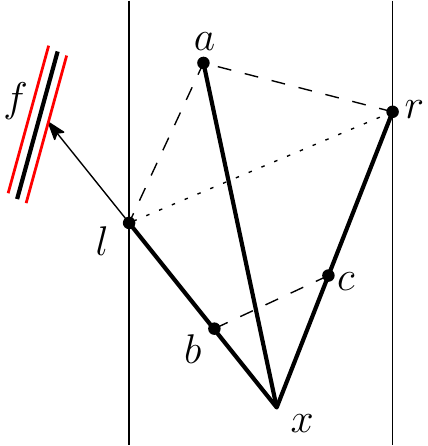}}
		&\multicolumn{1}{m{.25\columnwidth}}{\centering\vspace{0pt}\includegraphics[width=.22\columnwidth]{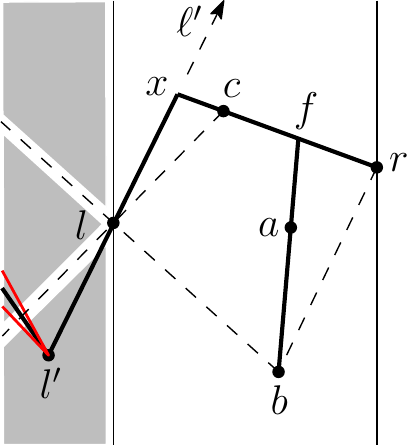}}
		&\multicolumn{1}{m{.25\columnwidth}}{\centering\vspace{0pt}\includegraphics[width=.22\columnwidth]{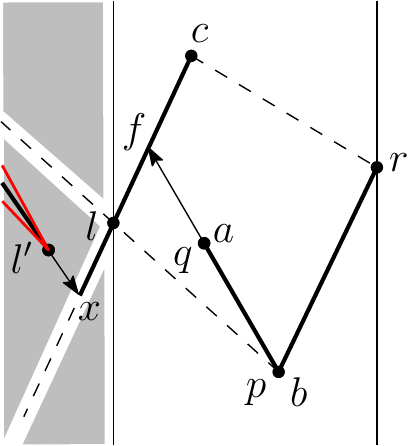}}
		\\
		\multicolumn{1}{m{.25\columnwidth}}{\centering\vspace{0pt}\includegraphics[width=.23\columnwidth]{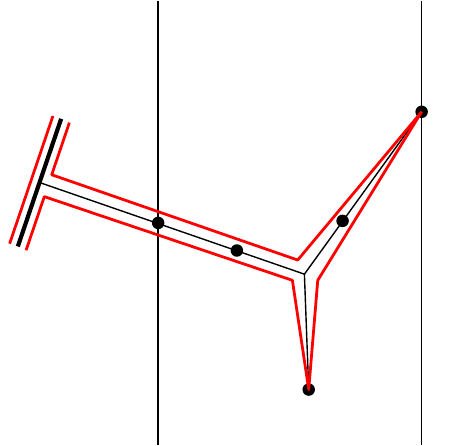}}
		&\multicolumn{1}{m{.25\columnwidth}}{\centering\vspace{0pt}\includegraphics[width=.22\columnwidth]{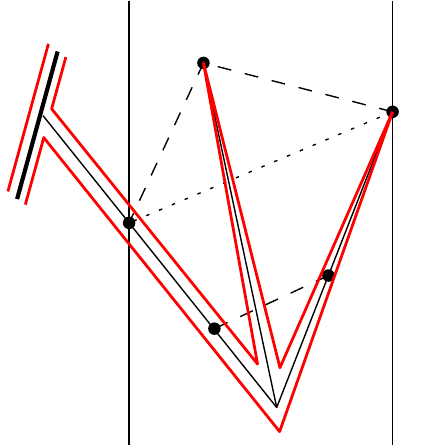}}
		&\multicolumn{1}{m{.25\columnwidth}}{\centering\vspace{0pt}\includegraphics[width=.22\columnwidth]{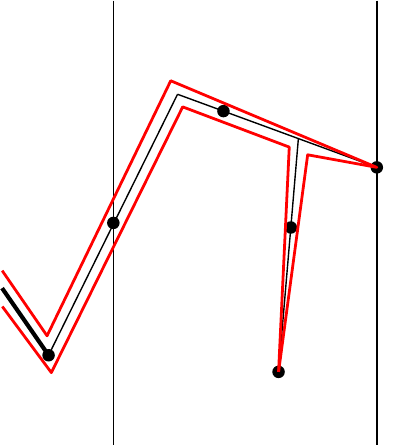}}
		&\multicolumn{1}{m{.25\columnwidth}}{\centering\vspace{0pt}\includegraphics[width=.22\columnwidth]{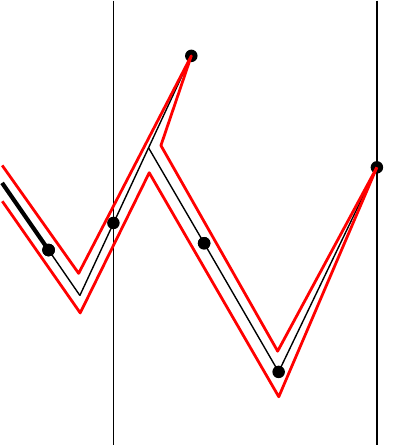}}
		\\
		(a)   &(b) &(c)&(d)
	\end{tabular}$
	\caption{Illustration of the proof of Theorem~\ref{tree-polygon-thr}. (a) $\conv{Q}$ has three vertices. (b) $\conv{Q}$ has five vertices. (c)-(d) $\conv{Q}$ has four vertices.}
	\label{polygon-fig}
\end{figure}

	\begin{itemize}
		\item {\em $\conv{Q}$ has three vertices.} Let $l$, $r$, $a$ be the vertices of $\conv{Q}$. Let $l,b,c,r$ be the four points that form a convex quadrilateral, and appear in this order along its boundary. Let $x$ be the intersection point of $\ell(l,b)$ and $\ell(r,c)$ which lies in $\bigtriangleup lra$. Let $f$ be the first intersection point of $T$ with the ray emanating from $x$ and passing through $l$, as in Figure~\ref{polygon-fig}(a). We cover the points by adding three segments $xf$, $xr$, and $xa$ to $T$. This generates only one new fork which is $f$ and it has multiplicity 1. Thus the invariant holds for the next iteration. 
		{\em \item $\conv{Q}$ has five vertices.} In this case $Q$ has a $4$-cap or a $4$-cup with endpoints $l$ and $r$. Let $l,b,c,r$ be such a cap or cup, as in Figure~\ref{polygon-fig}(b). We define $x$ and $f$ as in the previous case, and then cover the points by adding three segments $xf$, $xr$, and $xa$ to $T$. This generates one new fork which is $f$ and it has multiplicity 1. Thus the invariant holds for the next iteration. 
		{\em \item $\conv{Q}$ has four vertices.} After a suitable relabeling assume that $b$ and $c$ are on $\conv{Q}$. If both $b$ and $c$ lie above or below $\ell(l,r)$ then $l,b,c,r$ form a $4$-cap or a $4$-cup, in which case we cover the points as in the previous case. Thus we may assume $b$ lies below $\ell(l,r)$ and $c$ lies above $\ell(l,r)$. The lines $\ell(b,l)$ and $\ell(c,l)$ partition the halfplane to the left of $l$ into three regions (shaded regions in Figures~\ref{polygon-fig}(c) and \ref{polygon-fig}(d)). We distinguish two cases depending on the containment of $l'$ in these regions (recall that $l'$ is the rightmost scanned point in the previous iteration).
		
		\begin{itemize}
			\item {\em $l'$ lies above $\ell(b,l)$ or below $\ell(c,l)$.} Up to symmetry assume that $l'$ lies below $\ell(c,l)$; see Figure~\ref{polygon-fig}(c). Let $\ell'$ be the ray emanating from $l'$ and passing through $l$. Then $a$, $b$, $c$, and $r$ lie below $\ell'$. Let $x$ be the intersection point of $\ell(r,c)$ with $\ell'$ which lies in $H$. Consider the ray emanating from $b$ and passing through $a$. This ray hits $lx$ or $rx$ at a point which we denote by $f$. 
		We cover the points by adding three segments $xr$, $xl'$, and $bf$ to $T$. This generates only one new fork which is $f$ and it has multiplicity 1. Thus the invariant holds for the next iteration.

			\item {\em $l'$ lies below $\ell(b,l)$ and above $\ell(c,l)$.} This case is depicted in Figure~\ref{polygon-fig}(d). By the invariant, $l'$ has degree one in $T$, i.e. it is incident to exactly one segment in $T$. We extend this segment to the right until it intersects $\ell(b,l)$ or $\ell(c,l)$ for the first time. Up to symmetry we assume that the extension hits $\ell(c,l)$; we denote the intersection point by $x$. Among $a$ and $b$, let $p$ denote the one with a larger $x$-coordinate and $q$ denote the other one. Then the ray emanating from $p$ and passing through $q$ intersects a segment of $T$ or the segment $xc$. Consider the first such segment, and let $f$ denote the intersection point. We cover the points by adding three segments $xc$, $rp$, and $pf$ to $T$. (The segment incident to $l'$ was just extended to $x$.) Only one new fork is generated which is $f$ and it has multiplicity 1. Thus the invariant holds for the next iteration. 
		\end{itemize}
	\end{itemize}

	This is the end of an intermediate iteration. The noncrossing property of the resulting tree follows from our construction.

	For the first iteration of the algorithm we scan the two leftmost points of $K$, namely $p_1$ and $p_2$ where $p_1$ is to the left of $p_2$. We add to $T$ (which is initially empty) a long vertical segment $s_1$ that goes through $p_1$. Then we add to $T$ a second segment $s_2$ that connects $p_2$ to the midpoint of $s_1$. This midpoint is a fork with multiplicity $1$. The point $p_2$ will play the role of $l'$ for the second iteration (which is the first intermediate iteration). Notice that the invariant holds at this point as we have covered the $m=2$ scanned points by $2$ segments and $1$ fork.
	In the last iteration of the algorithm we are left with $w=(k-2)-5\lfloor(k-2)/5\rfloor\leqslant 4$ points. We connect these $w$ points by an $x$-monotone path with $w$ segments to the rightmost scanned point. Therefore, the final tree has at most $3\lfloor(k-2)/5\rfloor+2+w\leqslant\lceil3k/5\rceil+2$ segments with at most $\lfloor (k-2)/5\rfloor+1\leqslant\lceil k/5\rceil$ forks of multiplicity $1$.

Each iteration takes constant time. Therefore, after rotating and sorting the points in $O(k\log k)$ time, the rest of the algorithm takes linear time.
\end{proof}

\section{Concluding Remarks}
A natural open problem is to improve the presented upper bounds or the known lower bounds for $\pi(n)$, $\tau(n)$, and $\rho(k)$.
Here are some directions for further improvements on $\pi(n)$ and $\tau(n)$:
\begin{itemize}
	\item For the proof of Lemma~\ref{27-lemma} we used a 5-cap or a 5-cup which forced us to scan 21 points in each iteration (due to the result of Erd\H{o}s and Szekeres). If one could manage to use a 4-cap or a 4-cup instead, then it could improve the upper bound for $\pi(n)$ further. 
	\item Our iterative algorithm in the proof of Theorem~\ref{tree-thr}, covers $4$ points by $3$ edges in all cases except in the last case (where $\conv{Q}$ has four vertices and $a$ lies in $\bigtriangleup rbc$) for which it covers $5$ points by $4$ edges. The upper bound $4n/5$ for $\tau(n)$ comes from this case. If one could argue that this case won't happen often (for example by showing that it won't happen in three consecutive iterations or by choosing a different ordering for points), then it would lead to a slightly improved upper bound for $\tau(n)$. 
\end{itemize}

\bibliographystyle{abbrv}
\bibliography{Covering-Paths}
\end{document}